\documentclass[journal,twoside]{IEEEtran}
\usepackage{cite}
\usepackage{amsmath,amssymb,amsfonts}
\usepackage{amsthm}
\newtheorem{theorem}{Theorem}
\newtheorem{proposition}{Proposition}
\usepackage{graphicx}
\usepackage{textcomp}
\usepackage{xcolor}
\usepackage{algorithm}
\usepackage[noend]{algpseudocode}
\usepackage{makecell}
\usepackage{array}
\usepackage{url}
\usepackage{subcaption}
\usepackage[breaklinks=true,hidelinks]{hyperref}

\def\BibTeX{{\rm B\kern-.05em{\sc i\kern-.025em b}\kern-.08em
    T\kern-.1667em\lower.7ex\hbox{E}\kern-.125emX}}
\begin{document}

\title{SbDN: Source-based TSN-Grade Deterministic Networking using Commodity Switches}

\author{Mohammadparsa Karimi, Majid Nabi, Andrew Nelson, Kees Goossens, \IEEEmembership{Member, IEEE}, and Twan Basten, \IEEEmembership{Senior Member, IEEE}
\thanks{This work received funding from the European Chips Joint Undertaking under Framework Partnership Agreement No.~101139789 (HAL4SDV).}
\thanks{The authors are with the Electronic Systems (ES) group, Department of Electrical Engineering, Eindhoven University of Technology (TU/e), 5612~AZ Eindhoven, The~Netherlands (e-mail: \{m.karimi, m.nabi, a.t.nelson, k.g.w.goossens, a.a.basten\}@tue.nl).}}
\maketitle
\markboth{ }%
{Karimi \MakeLowercase{\textit{et al.}}: SbDN: Source-based TSN-Grade Deterministic Networking using Commodity Switches}

\begin{abstract}
Deterministic networking is essential for safety-critical applications in automotive, industrial, and aerospace systems, where bounded end-to-end latency must be guaranteed for time-critical traffic. Time-Sensitive Networking (TSN) provides the mechanisms to achieve such guarantees, but its deployment requires expensive TSN-capable switches at every hop and complex per-switch configuration that hinders runtime reconfiguration. This paper presents SbDN, a Multi-Agent Source-based architecture that achieves TSN-grade determinism using commodity Ethernet switches. SbDN moves all scheduling intelligence to a centralized controller composed of three cooperating agents and enforces the computed configurations exclusively at the source endpoints, leaving switches as simple forwarding elements. We propose two methods: Temporal Network Partitioning (TNP), which provides strict temporal isolation on pure FIFO switches, and Traffic Prioritization (TP), which leverages strict-priority queuing at switches to enable work-conserving best-effort traffic. Both methods are formally proven to guarantee that all admitted time-critical flows meet their end-to-end deadlines. Evaluation across 40 benchmark configurations on two topologies shows that TNP and TP achieve 100\% admission of time-critical traffic in every scenario, with scheduling times in the low-millisecond range suitable for safe runtime reconfiguration. Compared to a standard TSN baseline, SbDN delivers superior time-critical latency at a fraction of the switch infrastructure cost, while offering competitive best-effort throughput through the choice between the two methods.
\end{abstract}

\begin{IEEEkeywords}
Time-Sensitive Networking, deterministic networking, source-based scheduling, mixed-criticality traffic, credit-based shaper, time-aware shaper
\end{IEEEkeywords}

\section{Introduction}\label{sec:intro}

Deterministic networking has become a fundamental requirement across a growing range of industries where communication failures or excessive delays can compromise safety, reliability, or operational correctness. In automotive systems, distributed sensors, controllers, and actuators exchange time-critical data to enable functions such as autonomous driving, active safety, and powertrain control, all of which demand bounded end-to-end latencies in the order of microseconds to milliseconds \cite{Automotive}. In industrial automation, closed-loop control of robotic arms, conveyor systems, and process plants relies on periodic sensor-actuator communication that must meet strict deadlines to maintain stability and precision \cite{Industry}. In aerospace and avionics, fly-by-wire control and mission-critical telemetry require deterministic delivery with guaranteed fault tolerance \cite{Aerospace}. In each of these domains, the network must provide not only high throughput but also predictable, bounded timing behavior that standard best-effort Ethernet cannot guarantee.

Time-Sensitive Networking (TSN), a suite of standards developed by the IEEE 802.1 Task Group, has emerged as the leading technology for providing deterministic communication over standard Ethernet. TSN extends conventional Ethernet with a set of mechanisms that collectively enable bounded latency, temporal isolation between traffic classes, and coexistence of time-critical and best-effort traffic on a shared physical network \cite{TSN}. Among these mechanisms, the Time-Aware Shaper (TAS, IEEE 802.1Qbv) \cite{IEEEQbv} provides deterministic transmission windows through gate-controlled scheduling at each switch port, the Credit-Based Shaper (CBS, IEEE 802.1Qav) \cite{IEEEQav} regulates the transmission rate of individual traffic classes to prevent bandwidth starvation, frame preemption (IEEE 802.1Qbu/802.3br) \cite{IEEEQbu} allows high-priority frames to interrupt ongoing lower-priority transmissions to reduce worst-case latency, and IEEE 802.1AS \cite{IEEEAS} ensures network-wide time synchronization through the generalized Precision Time Protocol (gPTP). Additional standards within the TSN suite address further aspects such as stream reservation, traffic policing, and fault tolerance. Together, these mechanisms allow TSN to support mixed-criticality workloads in which safety-critical traffic and lower-priority data streams share the same physical infrastructure while each meets its respective communication requirements \cite{TSNG}. In this work, we focus on two such classes that capture the essential mixed-criticality trade-off: Time-Critical (TC) traffic, which carries delay-sensitive data subject to hard end-to-end deadlines, and Best-Effort (BE) traffic, which has no timing guarantees but should achieve high throughput on the remaining bandwidth.

Despite its technical capabilities, deploying TSN in practice comes at a significant cost. TSN-capable switches that support mechanisms such as TAS and CBS are substantially more expensive than standard Ethernet switches, as they require specialized hardware for gate-controlled scheduling, per-port credit shaping, and hardware-assisted time synchronization. Moreover, adopting TSN in an existing network is not an incremental upgrade; it typically requires replacing the entire switching infrastructure with TSN-enabled devices, since every switch on the path of a time-critical flow must participate in the scheduling. This cost barrier limits the practical adoption of TSN, particularly in cost-sensitive industries \cite{Cost}.

Beyond cost, the operational complexity of TSN configuration presents a further challenge. Deterministic behavior in TSN is achieved entirely through the correct configuration and scheduling parameters at every switch along the path of each time-critical flow \cite{TSNscheduling}. When network conditions change, for instance due to the addition of new flows, removal of existing ones, or shifts in traffic patterns, the configuration must be recomputed and redeployed across all affected switches. However, updating the configuration of active switches during operation risks disrupting the timing guarantees of all ongoing flows, since the transition from one schedule to another cannot in general be performed atomically across multiple devices. This instability during reconfiguration makes online adaptation difficult in practice, and many deployments avoid runtime reconfiguration altogether, restricting schedule updates to periods when the system is idle. This fundamentally limits the flexibility of TSN-based systems in environments where traffic patterns evolve dynamically \cite{Reconfiguration}.

In this work, we introduce SbDN, Source-based TSN-Grade Deterministic Networking without requiring TSN-capable switches. Instead of deploying scheduling and shaping mechanisms at every switch, SbDN moves all scheduling intelligence to a centralized software controller composed of cooperating agents and enforces the computed configuration at the source endpoints only. Switches remain commodity forwarding elements with no scheduling functionality. Throughout this paper, a \emph{commodity switch} is a standard store-and-forward Ethernet switch without TSN scheduling or shaping support. We propose two methods in this work, one of which requires only FIFO forwarding at the switches, while the other requires standard strict-priority queuing; both require the endpoints to share a common notion of time. The main contributions of this work are as follows:

\begin{enumerate}
\item We define a formal system model and problem formulation for source-based deterministic networking using TSN functionalities enforced exclusively at the endpoints.
\item We propose two architectural scenarios that achieve deterministic communication using different combinations of source-side mechanisms and switch capabilities, offering distinct trade-offs between switch complexity and scheduling complexity.
\item We provide formal guarantees on time-critical latency, including a proof that all TC frames meet their end-to-end deadlines under the proposed configurations.
\item We evaluate SbDN through extensive simulation, comparing its performance against both an Ethernet-based solution and standard TSN baseline.
\item We release the complete implementation as open-source software to support reproducibility and further research.\footnote{The software is available via the TU/e ES GitHub repository (https://github.com/TUE-EE-ES/SbDN).}
\end{enumerate}

The key idea of SbDN is to replace per-switch scheduling hardware with a software-defined central controller that computes all configurations and pushes them to the source endpoints before starting execution. SbDN organizes time into a repeating scheduling cycle of fixed duration, within which all periodic TC frames are placed and BE traffic is regulated. 
The controller is composed of three cooperating agents. The TC Scheduling Agent assigns collision-free release times to all time-critical frames, ensuring that no two TC frames occupy the same link simultaneously anywhere in the network. The Network Partitioning Agent constructs a global gate control list that partitions the scheduling cycle into TC and BE phases, providing temporal separation between traffic classes without requiring any TSN support at the switches. The BE Shaping Agent configures a credit-based idle slope at each endpoint to regulate best-effort transmission rates and bound queue buildup at intermediate switches. We propose two methods that combine these agents in different ways. Temporal Network Partitioning (TNP) employs all three agents and targets networks with pure FIFO commodity switches, achieving determinism through strict temporal separation enforced at the sources. Traffic Prioritization (TP) employs only the TC Scheduling Agent and the BE Shaping Agent, relying on commodity switches with strict-priority queuing to handle TC-versus-BE contention, which simplifies the scheduling problem and makes BE traffic work-conserving.

The remainder of this paper is organized as follows. Section~\ref{sec:example} presents an illustrative example that motivates the source-based approach. Section~\ref{sec:related_work} reviews related work. The system model and problem definition are presented in Section~\ref{sec:sysmodel}. Section~\ref{sec:SbDN} details the SbDN architecture, including both scenarios and their formal guarantees. Section~\ref{sec:eval} presents the evaluation and results. Section~\ref{sec:conclusions} concludes the paper and outlines directions for future work.

\section{Illustrative Example}\label{sec:example}

Although the SbDN architecture is domain-agnostic, we illustrate its operation through a concrete automotive example that highlights the key ideas before the formal treatment in subsequent sections. The high-level application scenario is adapted from~\cite{IlustExample}. 

\subsection{Application Scenario}

Consider an Advanced Driver Assistance System (ADAS) in which three sensors transmit periodic data to a central compute Electronic control unit (ECU): a LiDAR sensor operating at 10~Hz (period $T = 100$~ms), a camera operating at 20~Hz (period $T = 50$~ms), and an Inertial Measurement Unit (IMU) operating at 100~Hz (period $T = 10$~ms). Each sensor stream is time-critical and must arrive at the ECU within a bounded deadline. In addition to these TC flows, the network also carries BE traffic.

The network uses the topology shown in Figure~\ref{fig:ilustrative_topology}: four endpoints connected through two switches. Three endpoints ($v_1$ for LiDAR, $v_2$ for camera, $v_3$ for IMU) serve as sensors and the fourth ($v_4$) is the ADAS compute ECU. All links operate at 1~Gbps with a switch processing delay of 1~µs. For simplicity, we assume that the LiDAR and camera produce frames of 1500~B (transmission time 12~µs per link), while the IMU produces smaller frames of 250~B (transmission time 2~µs per link).\footnote{Realistic camera and LiDAR frame sizes may result in several 1000s of 1500 B max-sized Ethernet packets per frame; this would complicate the illustrative example. This simplifying assumption does not fundamentally impact the examples and proposed techniques.} The LiDAR and camera connect to switch $s_1$, the IMU connects to switch $s_2$, and the ECU also connects to $s_2$. All three sensor flows converge on the link $e_5$ connecting $s_2$ to the ECU, making it the shared bottleneck. Within a single 100~ms scheduling cycle, the TC Scheduling Agent must place 13 frames: 1 LiDAR, 2 camera, and 10 IMU frames. The chosen periods all divide the cycle duration exactly, so the same frame count repeats in every cycle.

\begin{figure}[t]
\centering
\includegraphics[width=0.9\columnwidth]{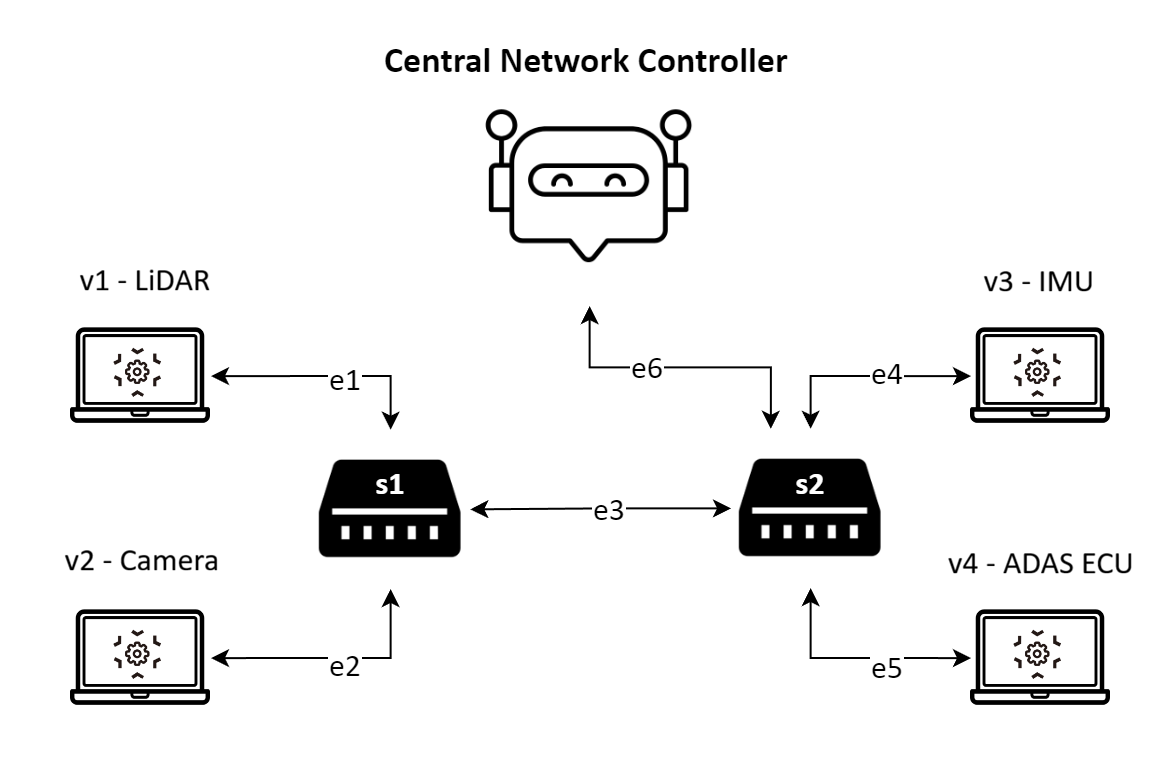}
\caption{Illustrative example topology}
\label{fig:ilustrative_topology}
\end{figure}

\subsection{TNP Method}

Temporal Network Partitioning assigns collision-free release times to all 13 TC frames and partitions the cycle into dedicated TC and BE phases enforced by a global Gate Control List (GCL) at every endpoint. A GCL, defined in IEEE 802.1Qbv and detailed in Section~\ref{sec:tas_model}, specifies for each segment of a repeating cycle which traffic classes are permitted to transmit; here, we use it at the source endpoints rather than at the switch egress ports, where standard TSN applies, in general port-specific, GCLs. The cycle is called \emph{global} because, in the current version of SbDN, the Network Partitioning Agent constructs a single GCL that is enforced identically at every endpoint of the network; more compact per-endpoint GCLs are an alternative construction, discussed in Section~\ref{sec:tas_agent}. SbDN requires all endpoints to share a common notion of time. We assume PTP-based synchronization, which can be provided either by commodity switches with PTP support or by software-based solutions; the synchronization requirements are discussed in Section~\ref{sec:SbDN}. Figure~\ref{fig:illust_tnp_perlink} shows the resulting frame placement across all five links over the full 100~ms cycle. Each frame appears at its scheduled position: the 10 IMU frames at regular 10~ms intervals on links $e_4$ and $e_5$, the two camera frames at 0 and 50~ms on links $e_2$, $e_3$, and $e_5$, and the single LiDAR frame near the start of the cycle on links $e_1$, $e_3$, and $e_5$.\footnote{For realistic camera/LiDAR frame sizes, these individual camera and LiDAR frames turn into bursts of frames transmitted consecutively, not fundamentally changing the schedule. The only effect is that the first and sixth TC phases in Fig.~\ref{fig:illust_tnp_gcl}, explained below, become larger.} No two frames overlap on any link. The top panel of the figure is schematic and not to scale: the actual frame durations are in the order of microseconds while the cycle spans 100~ms. So each frame is plotted at its source release time on every link of its route. To make the per-hop pipelining visible, the bottom panel zooms into the first TC cluster at the microsecond scale. It shows how the LiDAR frame propagates from one link to the next: after being transmitted on $e_1$, it is processed by switch $s_1$ and then transmitted on $e_3$, and similarly forwarded by $s_2$ onto $e_5$. Each hop adds the transmission time of the frame plus a small switch processing delay, totaling roughly 13~µs per hop in this example. The same pipelining applies to the camera frame, which traverses the same shared link $e_3$ and is packed back-to-back behind the LiDAR.

\begin{figure}[t]
\centering
\includegraphics[width=\columnwidth]{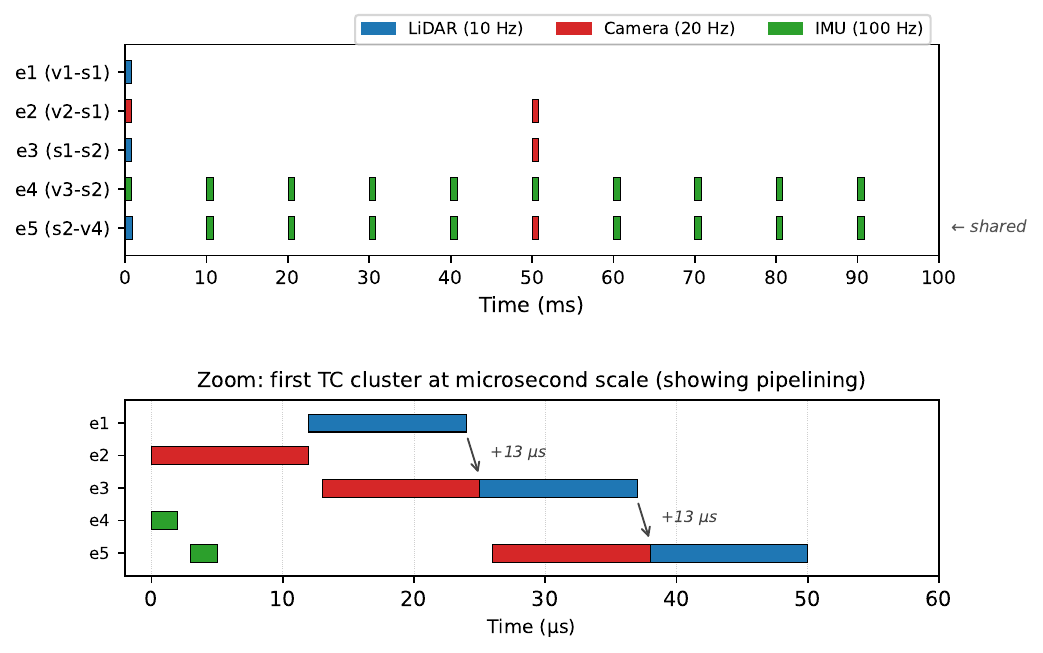}
\caption{TNP: collision-free placement of all 13 TC frames. Top: per-link timelines over one 100~ms cycle (schematic). Bottom: zoom into the first TC cluster at the microsecond scale.}
\label{fig:illust_tnp_perlink}
\end{figure}

Based on this placement, the Network Partitioning Agent constructs a global GCL that alternates between TC phases and BE phases. A guard band is inserted before each TC phase to ensure that all in-flight BE frames have drained from the network. Figure~\ref{fig:illust_tnp_gcl} shows the resulting cycle structure: 10 short TC phases (one for each 10~ms boundary where IMU frames are released, with the camera and LiDAR frames absorbed into the clusters at 0 and 50~ms), separated by BE phases during which best-effort traffic may use the network.

\begin{figure}[t]
\centering
\includegraphics[width=\columnwidth]{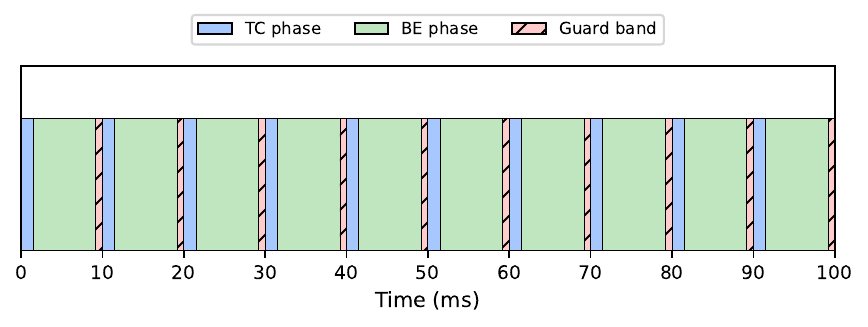}
\caption{TNP: visualization of the GCL cycle, enforced identically at every endpoint, with 10 TC phases, BE phases, and guard bands over one 100~ms cycle (schematic, not to scale).}
\label{fig:illust_tnp_gcl}
\end{figure}

From the per-link placement in Fig.~\ref{fig:illust_tnp_perlink}, the Network Partitioning Agent builds a single global GCL that is identical at every endpoint. It collects the TC reservations from all links onto one timeline and marks every interval that carries a TC frame anywhere in the network as a TC phase, leaving the remaining intervals as BE phases. Each endpoint enforces this same GCL: it transmits its own TC frames at their scheduled release times during the TC phases and confines its BE traffic to the BE phases. Any given source transmits in only some of the TC phases; the LiDAR source $v_1$, for example, sends a single TC frame near the start of the cycle and is idle during the later ones. In a TC phase where it does not transmit, $v_1$ simply keeps its BE gate closed, while its TC gate stays open harmlessly, since it has no TC frame scheduled then and the closed BE gate prevents any leakage. A per-endpoint GCL could drop the TC segments in which other endpoints have frames scheduled and be more compact, but it would not enlarge the BE time in this case: every flow in this topology shares the bottleneck link $e_5$, so every source must keep its BE gate closed during every TC phase.

Because the switches are pure FIFO forwarders, the temporal separation enforced by the global GCL is the sole mechanism that prevents BE traffic from interfering with TC frames. No TSN hardware is required at the switches.

\subsection{TP Method}

Traffic Prioritization achieves the same deadline guarantees using commodity switches with strict-priority queuing instead of temporal partitioning. Because switches with strict priority always serve TC frames before BE frames, no GCL or guard band is needed. Figure~\ref{fig:illust_tp_perlink} shows the resulting frame placement. At the 100~ms scale (top panel), the placement looks nearly identical to TNP; the difference becomes visible only at the microsecond scale (bottom panel), where the per-hop offset of the LiDAR frame grows from 13~µs in TNP to 25~µs in TP. This inflation reflects the worst-case BE blocking that an initial TC frame in a burst of consecutive TC frames may experience at each switch if a BE frame has just begun transmission on the output link: because frames are transmitted in full once started, a newly arriving TC frame may have to wait for a full-size BE frame to finish before being served, adding up to 12~µs per switch on top of the transmission and processing delays. This blocking is accounted for in the scheduling and is critical for the formal deadline guarantees. In return, BE traffic transmits continuously throughout the cycle without being confined to designated phases, as indicated by the shaded background in both panels.

\begin{figure}[t]
\centering
\includegraphics[width=\columnwidth]{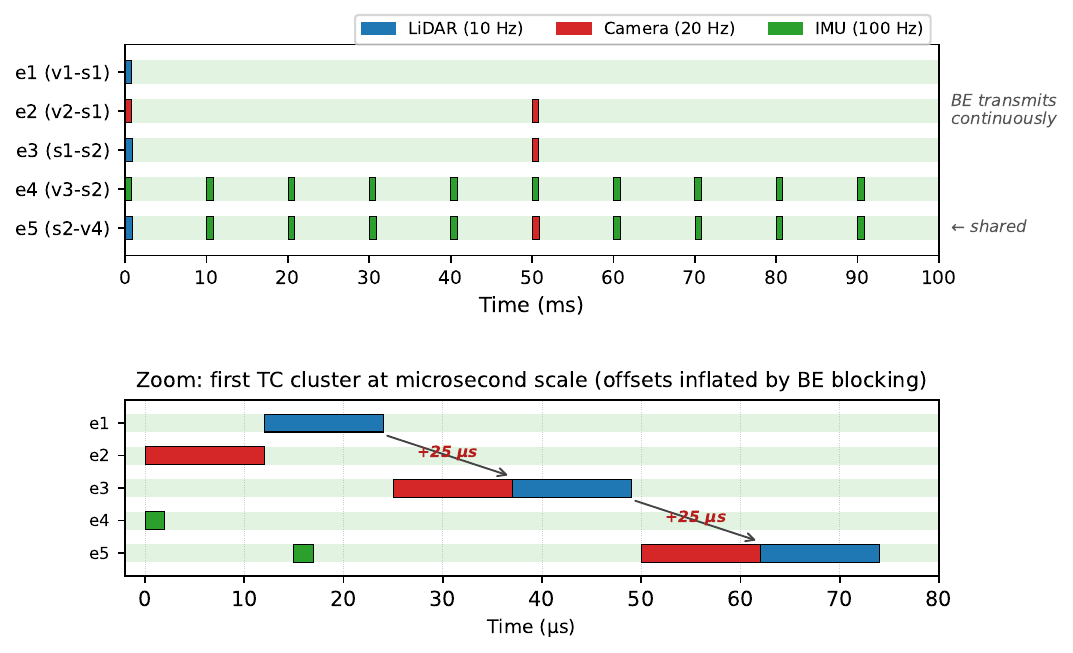}
\caption{TP: TC frame placement with offsets inflated by worst-case BE blocking. Top: per-link timelines over one 100~ms cycle (schematic). Bottom: zoom into the first TC cluster at the microsecond scale. The shaded background indicates BE traffic may transmit at any time outside the intervals occupied by TC frames.}
\label{fig:illust_tp_perlink}
\end{figure}

\subsection{TNP-TP Comparison}

Both methods guarantee that every TC frame meets its deadline, but they differ in how this guarantee is achieved. TNP enforces strict temporal separation between traffic classes using a source-side GCL on pure FIFO switches, while TP relies on strict-priority queuing at the switches to resolve contention, resulting in a simpler source configuration and higher BE utilization. In both cases, all scheduling intelligence resides in the central controller and all enforcement occurs at the source endpoints. The switches do not require TSN functionality. The formal system model, algorithms, and proofs that underpin these results are presented in Sections~\ref{sec:sysmodel}~and~\ref{sec:SbDN}.

\section{Related Work}\label{sec:related_work}

Research on TSN configuration has progressed along several distinct directions. We organize this section around three lines of work that frame our contribution: scheduling of time-critical traffic, joint scheduling of mixed-criticality traffic, and recent efforts to provide deterministic guarantees over commodity switches. We then position SbDN relative to these directions and motivate our choice of baselines.

\subsection{Time-Critical Scheduling}

A large body of TSN work addresses the synthesis of GCLs for the TAS at every TSN switch port to guarantee bounded latency for time-critical (TC) flows. Exact methods formulate the problem as Integer Linear Programs \cite{Schweissguth2020} or Satisfiability Modulo Theories instances \cite{Craciunas2016}, producing optimal schedules at the cost of runtimes that grow rapidly with the network size and the number of flows. Heuristic approaches trade optimality for scalability. D\"urr and Nayak \cite{Durr2016} propose a no-wait scheduling formulation that minimizes flow time, and Move-to-Front tabling \cite{Jin2020} compresses GCLs by packing transmissions back-to-back. Beyond classical heuristics, several recent works apply learning-based techniques to TAS scheduling, including Deep Reinforcement Learning approaches that adapt schedules at runtime \cite{Karimi2025DRL,He2023DeepScheduler,Roberty2024}. These works share the same underlying assumption: every switch in the network is TSN-capable and supports per-port TAS hardware. Their contribution lies in algorithmic efficiency rather than in the architecture of where scheduling is enforced.

Beyond Ethernet, the systems-on-chip community, where ultra-low communication latency is essential, developed the idea to create and align slot reservations across networks-on-chip to provide low latency and latency guarantees for dedicated communication streams \cite{Goossens2005Aethereal,Psarras2016PhaseNoC,Kasapaki2016Argo}. Our SbDN approach develops similar solutions for time-critical communication over commodity Ethernet networks, building on the TSN TAS concept, combined with traffic shaping of best-effort traffic and optional strict-priority queuing.

\subsection{Mixed-Criticality Scheduling}

A second line of work addresses the joint configuration of multiple traffic classes coexisting on the same network, typically TC, Audio Video Bridging (AVB), and BE. Gavrilu\c{t} and Pop \cite{Gavrilut2018} were among the first to point out that synthesizing GCLs for TC traffic in isolation can render AVB traffic unschedulable, and they proposed a GRASP-based metaheuristic that takes AVB worst-case delays into account when selecting TC schedules. Guo et al.\ \cite{Guo2024} observe that the static assignment of flows to TSN traffic classes can cause severe queuing delays under uneven load, and propose dependency-aware priority adjustment algorithms that dynamically reassign flows across TT, AVB-A, and AVB-B classes based on link-overlap conditions and flow features. EAST~\cite{Mateu2026} considers the scheduling of scheduled traffic (ST) and AVB, which are both time critical. EAST addresses the inefficiency of iterative feedback loops between ST schedule synthesis and AVB schedulability analysis by performing a worst-case response time analysis to derive, for each AVB stream, the maximum ST interference it can tolerate without missing its deadline. This bound is then translated into a sliding-window constraint imposed on the ST scheduler, enabling single-pass synthesis that guarantees schedulability of both classes and reducing scheduling time by several orders of magnitude compared to prior methods. HERMES \cite{Bujosa2022} schedules TC frames link-by-link from destination to source using multiple TC queues and produces GCLs that coexist with AVB and BE traffic on the same switch ports, achieving scheduling times below 10~ms while preserving compatibility with the CBS shaping of lower-priority classes. These works assume TSN-capable hardware throughout the network and do not address the cost of deployment. Although in this paper, we consider only TC and BE traffic, both the TNP and TP approaches can be extended to multiple classes with timing constraints as in the discussed papers by considering all these classes as TC traffic. A dedicated scheduler could then be integrated in the proposed TC Scheduling Agent to compute schedules for multiple TC flows from different classes that then serve as a basis for TNP or TP.

\subsection{Deterministic Networking on Commodity Switches}

A third, more recent line of work targets the cost barrier of TSN deployment by attempting to provide deterministic guarantees on commodity Ethernet switches. LCDN \cite{Diederich2025} is the most direct example. It deploys a centralized controller that uses Deterministic Network Calculus to perform per-flow admission control, assigns each flow to a priority queue, and routes traffic across multiple VLAN-based spanning trees on commodity strict-priority switches. Source endpoints regulate transmission with a Token Bucket Filter to enforce the arrival curves used in the admission analysis. LCDN demonstrates that competitive flow acceptance rates can be achieved on low cost switches, providing the first concrete evidence that determinism is achievable without TSN hardware.

In contrast to the works above, SbDN addresses both mixed-criticality coexistence and the cost barrier simultaneously. The TC Scheduling Agent computes collision-free release times that are enforced at the source endpoints, which removes the need for switch-side TAS hardware while preserving time-triggered determinism. The TNP method targets pure FIFO commodity switches, which neither switch-based TAS schedulers nor LCDN can support: both rely on the switch to separate traffic classes, either through scheduled gates or through strict-priority queuing. TNP avoids this requirement by separating TC and BE traffic in time at the source, so a FIFO switch never has any contention to resolve. The TP method, in turn, lets the switch resolve contention via strict-priority queuing, which simplifies the source side and allows BE traffic to be work-conserving. The BE Shaping Agent handles mixed-criticality coexistence at the source through bottleneck-aware CBS configuration, complementing rather than replacing the per-switch shaping mechanisms studied in prior mixed-criticality work.

For empirical evaluation of our proposed approaches, we select HERMES \cite{Bujosa2022} and LCDN \cite{Diederich2025} as baselines. HERMES represents the standard TSN approach for mixed-criticality scheduling on TSN-capable switches and allows us to quantify the cost of moving enforcement from switches to source endpoints. LCDN represents the closest alternative philosophy for deterministic networking on commodity switches and allows us to compare our time-triggered design against a network-calculus-based admission control approach. Both HERMES and LCDN address mixed-criticality traffic, both have publicly available open-source implementations and together they bracket the design space SbDN occupies.

Several simulation environments and testbeds support the evaluation of TSN configurations, including OMNeT++ with TSN extensions \cite{Varga2010}, hardware-based testbeds such as EnGINE \cite{Rezabek2022}, and integrated platforms that combine simulation with analytical performance models. We use INSIM \cite{Karimi2025INSIM} as our evaluation platform. INSIM provides a fully open-source modular framework with a plug-in architecture for custom schedulers and resource managers, which we use to integrate the source-based mechanisms introduced in this paper alongside the HERMES and LCDN baselines under a unified evaluation setup.

\section{System Model and Problem Definition}\label{sec:sysmodel}
This section presents the formal system model and defines the configuration problem addressed in this work. The high-level structure of the system model and problem definition draws on the formulation in~\cite{Karimi2025}, which we adapt and extend to the source-based setting introduced in this paper. We first describe the network model, followed by the workload model that captures both time-critical and best-effort communication. We then formalize the two key traffic shaping mechanisms, TAS and the CBS, whose configuration at the source endpoints forms the core of our approach. Finally, we state the scheduling and configuration problem that the proposed architecture must solve. For convenience, Table~\ref{tab:notation} summarizes the principal notations used throughout the paper.

\begin{table}[!tbp]
\centering
\caption{Summary of notations}
\label{tab:notation}
\footnotesize
\renewcommand{\arraystretch}{1.1}
\setlength{\tabcolsep}{1pt}
\begin{tabular}{@{}c c p{0.66\columnwidth}@{}}
\hline
\textbf{Symbol} & \textbf{Unit} & \textbf{Description} \\
\hline
$G = (V, E)$ & --- & Network graph \\
$V_{\mathrm{ep}}$ & --- & Set of endpoints \\
$V_{\mathrm{sw}}$ & --- & Set of intermediate switches \\
$\mathit{GCL}_v$ & --- & Gate control list at endpoint $v$ (Sec.~\ref{sec:tas_model}) \\
$idleSlope_v$ & bits/s & Vector of idle slopes at endpoint $v$ (Sec.~\ref{sec:cbs_model}) \\
$\delta_s$ & s & Processing delay at switch $s$ \\
$v^{1}_e,\; v^{2}_e$ & --- & Nodes connected by link $e$ \\
$C_e$ & bits/s & Capacity of link $e$ \\
\hline
$F$ & --- & Set of all flows \\
$v^{\mathrm{src}}_f$ & --- & Source endpoint of flow $f$ \\
$v^{\mathrm{dst}}_f$ & --- & Destination endpoint of flow $f$ \\
$pcp_f$ & --- & Traffic class of flow $f$, $pcp_f \in \mathrm{PCP}$ \\
$\mathrm{PCP}$ & --- & Set of traffic classes ($\{\mathrm{TC}, \mathrm{BE}\}$ in this work) \\
$T_f$ & s & Period of flow $f$ \\
$n_f$ & --- & Number of frames transmitted by $f$ per period\\
$d_f$ & s & End-to-end deadline of flow $f$ \\
$L_f$ & bits & Frame size of flow $f$ \\
$R_f$ & --- & Route of flow $f$ \\
$t^{\mathrm{gen}}_f$ & s & Generation time of flow $f$ within each period \\
\hline
$T_{\mathrm{cyc}}$ & s & TAS cycle duration \\
$m$ & --- & Number of segments in the GCL \\
$\Delta_i$ & s & Duration of the $i$-th GCL segment \\
$x_{i,j}$ & --- & Gate state for class $j$ in segment $i$ ($0$ or $1$) \\
\hline
$cr_{v,j}$ & bits & Credit value for class $j$ at endpoint $v$ \\
$idleSlope_{v,j}$ & bits/s & Idle slope for class $j$ at endpoint $v$ \\
$sendSlope_{v,j}$ & bits/s & Send slope for class $j$ at endpoint $v$ \\
$cr_{\max}$ & bits & Upper credit bound \\
$cr_{\min}$ & bits & Lower credit bound \\
$t_{\mathrm{wait}}$ & s & Wait time until credit recovers to zero \\
\hline
$t^{\mathrm{rel}}_f$ & s & Release time assigned to TC flow $f$ by the TC Scheduling Agent \\
$t^{\mathrm{arr}}_{f,i}$ & s & Arrival time of frame $i$ of flow $f$ at its destination \\
\hline
$F_{\mathrm{TC}}$ & --- & Set of TC flows, $\{f \in F \mid pcp_f = \mathrm{TC}\}$ \\
$T_e$ & --- & Reservation timeline on link $e$ (interval tree) \\
$\omega_f(k)$ & s & Cumulative offset from source to link $e_k$ on route $R_f$ \\
$\tau_{f,e}$ & s & Transmission time of flow $f$ on link $e$, $L_f / C_e$ \\
$\tau^{\max}_f$ & s & Transmission time of $f$ on the slowest link of $R_f$ \\
$t^{\mathrm{rel}}_{f,i}$ & s & Release time of frame $i$ of flow $f$, $t^{\mathrm{rel}}_f + i \cdot \tau^{\max}_f$ \\
$\Lambda_f$ & s & Duration of the aggregate reservation of one burst of $f$, $n_f \cdot \tau^{\max}_f$ \\
$\mathcal{B}_{e}(t_{start}, t_{end})$ & s & Forbidden release times from reservation $[t_{start}, t_{end})$ on link $e$ \\
$\mathcal{B}_f$ & s & Union of all forbidden ranges for flow $f$ \\
\hline
$F_{\mathrm{BE}}$ & --- & Set of BE flows, $\{f \in F \mid pcp_f = \mathrm{BE}\}$ \\
$S_f$ & --- & Set of intermediate switches along route $R_f$ \\
$e_{\mathrm{out}}^{s,f}$ & --- & Output link of switch $s$ along route $R_f$ \\
$N_{\mathrm{BE}}(v)$ & bits & BE bits sent from endpoint $v$ per cycle \\
$N_{\mathrm{BE}}(e)$ & bits & BE bits traversing link $e$ per cycle \\
$K_{\mathrm{BE}}(e)$ & --- & Number of BE flows whose route traverses link $e$ \\
$d_{\mathrm{in}}(e)$ & --- & Number of input links of the feeding switch carrying BE traffic to $e$ \\
$\beta_{\mathrm{BE}}(e)$ & --- & BE burst bound: worst-case number of BE frames simultaneously queued at the port feeding $e$ \\
$g_{\mathrm{be}}$ & s & Guard band duration (BE-to-TC transition) \\
$\Delta_{\mathrm{min}}$ & s & Minimum useful BE window duration \\
\hline
$\mathcal{E}_v$ & --- & Set of links traversed by any BE flow originating at $v$ \\
$b(v)$ & --- & Bottleneck link of endpoint $v$ along its BE paths \\
\hline

$\tau^{\mathrm{BE}}_e$ & s & Worst-case BE frame transmission time on link $e$, $L^{\mathrm{max}}_{\mathrm{BE}} / C_e$ \\
$L^{\mathrm{max}}_{\mathrm{BE}}$ & bits & Maximum BE frame size \\
\hline
$\omega^{B}_f(k)$ & s & Cumulative offset on route $R_f$ including BE blocking \\
\hline
\end{tabular}
\end{table}

\subsection{Network Model}

We model the network as an undirected graph $G = (V, E)$, where $V$ denotes the set of network devices and $E$ denotes the set of full-duplex communication links. The node set is partitioned as $V = V_{\mathrm{ep}} \cup V_{\mathrm{sw}}$, where $V_{\mathrm{ep}}$ is the set of endpoints and $V_{\mathrm{sw}}$ is the set of intermediate switches.

Each endpoint $v \in V_{\mathrm{ep}}$ is defined as
\begin{equation}
v = (\mathit{GCL}_v,\; idleSlope_v),
\end{equation}

where $\mathit{GCL}_v$ is the gate control list and $idleSlope_v$ is the vector of idle slopes assigned to the endpoint. These parameters govern the traffic shaping behavior (TAS and CBS) at the source and are defined in detail in Sections~\ref{sec:tas_model} and~\ref{sec:cbs_model}, respectively.

Each switch $s \in V_{\mathrm{sw}}$ comes with a parameter $\delta_s$ denoting the processing delay per frame.

Each link $e \in E$ is defined as
\begin{equation}
e = (v^{1}_e,\; v^{2}_e,\; C_e),
\end{equation}

where $v^{1}_e \in V$ and $v^{2}_e \in V$ are the two nodes connected by the link and $C_e$ is the link capacity in bits per second, available in each direction simultaneously. We assume at most one link between any pair of nodes. Signal propagation delays on the links are neglected, as they are orders of magnitude smaller than frame transmission times and switch processing delays in the short-range networks targeted in this work; if needed, they can be incorporated as an additive per-link constant without affecting the structure of the proposed methods.

\subsection{Workload Model}\label{sec:workload-model}

The workload consists of a set of flows $F$ carried over the network. Each flow $f \in F$ is defined as
\begin{equation}
f = (v^{\mathrm{src}}_f,\; v^{\mathrm{dst}}_f,\; pcp_f,\; T_f,\; n_f,\; d_f,\; L_f,\; R_f,\; t^{\mathrm{gen}}_f),
\end{equation}

where $v^{\mathrm{src}}_f \in V_{\mathrm{ep}}$ is the source endpoint, $v^{\mathrm{dst}}_f \in V_{\mathrm{ep}}$ is the destination endpoint, $pcp_f \in \mathrm{PCP}$ is the traffic class of the flow, $T_f$ is the transmission period, $d_f$ is the end-to-end deadline with $d_f \leq T_f$, $n_f$ is the number of frames transmitted per period, $L_f$ is the frame size in bits, $R_f = (e_1, e_2, \ldots)$ is the route of the flow defined as the ordered sequence of links traversed from source to destination, and $t^{\mathrm{gen}}_f$ is the generation time at which the application produces its first frame within each period. Each flow periodically generates one or more frames per period. If more than one frame is generated, this occurs in a burst of consecutive frames that are generated from an application-level data payload. These frames must traverse the flow's route and arrive at the destination within $d_f$ time units of the generation of the first frame.

The traffic class $pcp_f$ determines the priority and scheduling treatment of a flow. The TSN standard supports up to eight distinct traffic classes. In this work, we consider two: time-critical (TC) and best-effort (BE); that is, $\mathrm{PCP} = \{\mathrm{TC},\; \mathrm{BE}\}$, with TC having strictly higher priority. TC flows carry delay-sensitive data and must meet their deadlines. BE flows do not have hard deadline constraints. For a BE flow, the parameters $T_f$ and $d_f$ are not used. Restricting the methods to two classes is not a limitation. Intermediate priority classes typically used in TSN, such as AVB, exist to let switches differentiate between flows with bounded but non-strict latency requirements. In SbDN, the TC Scheduling Agent assigns each TC flow its own collision-free release time at the source. Thus, flows with bounded latency requirements such as AVB can be scheduled as TC, and others are treated as BE.

\subsection{Source-Based TAS Model}
\label{sec:tas_model}

TAS, defined by the IEEE 802.1Qbv standard, is a traffic shaping mechanism that provides temporal isolation between traffic classes. TAS operates by partitioning time into a repeating cycle of duration $T_{\mathrm{cyc}}$. The cycle duration is set to the hyperperiod of the TC flows, that is, the least common multiple of their periods $T_f$, so that the placement of all periodic TC frames repeats identically in every cycle. Within each cycle, a Gate Control List (GCL) specifies a sequence of $m$ non-overlapping time segments $\Delta_1, \Delta_2, \ldots, \Delta_m$. Each segment determines which traffic classes are permitted to transmit through a binary gate state per class, as shown in Table~\ref{tab:gcl_structure}.

\begin{table}[h]
\centering
\caption{Structure of a GCL}
\label{tab:gcl_structure}
\begin{tabular}{ccccc}
\hline
\textbf{Segment} & $pcp_1$ & $pcp_2$ & $\cdots$ & $pcp_{|\mathrm{PCP}|}$ \\
\hline
$\Delta_1$ & $x_{1,1}$ & $x_{1,2}$ & $\cdots$ & $x_{1,|\mathrm{PCP}|}$ \\
$\Delta_2$ & $x_{2,1}$ & $x_{2,2}$ & $\cdots$ & $x_{2,|\mathrm{PCP}|}$ \\
$\vdots$   & $\vdots$  & $\vdots$  & $\ddots$ & $\vdots$ \\
$\Delta_m$ & $x_{m,1}$ & $x_{m,2}$ & $\cdots$ & $x_{m,|\mathrm{PCP}|}$ \\
\hline
\end{tabular}
\end{table}

Here, $x_{i,j} \in \{0, 1\}$ indicates whether the gate for traffic class $j$ is open ($x_{i,j} = 1$) or closed ($x_{i,j} = 0$) during segment $\Delta_i$. Multiple traffic classes may be open simultaneously within the same segment.

A segment in which all gates are closed is called a guard band. Guard bands are inserted at the boundary between segments of different traffic classes to account for frames that are already in transit within the network. Because switches operate in store-and-forward mode and do not preempt frames, a frame that has begun transmission on a link will complete before yielding the medium. The guard band provides sufficient time for all such in-flight frames to be fully delivered before the next traffic class begins transmission, thereby preventing interference between classes.

In the standard TSN deployment, a GCL is defined per egress port at every network device, including all intermediate switches. In this work, TAS is configured and enforced exclusively at the source endpoints. Each endpoint has a single egress port, so a single $\mathit{GCL}_v$ is assigned per (source) endpoint $v$. A GCL is therefore a per-endpoint object; how its contents are computed, and whether the same GCL is shared across endpoints, depends on the scheduling method and is detailed in Section~\ref{sec:tas_agent}. Throughout the remainder of this paper, TAS refers to this source-based variant unless stated otherwise. Furthermore, since we consider two traffic classes ($\mathrm{PCP} = \{\mathrm{TC},\; \mathrm{BE}\}$), the gate state per segment reduces to two binary values $(x_{i,\mathrm{TC}},\; x_{i,\mathrm{BE}})$.

\subsection{Source-Based CBS Model}
\label{sec:cbs_model}

CBS, defined by the IEEE 802.1Qav standard, is a traffic shaping mechanism that regulates the transmission rate of a traffic class by maintaining a credit counter. The credit determines whether a frame is eligible for transmission: a frame may only be transmitted when its associated credit is non-negative.

For each traffic class $j \in \mathrm{PCP}$ at an endpoint $v$, the CBS maintains a credit value $cr_{v,j}$ that evolves over time according to two parameters: the idle slope $idleSlope_{v,j}$ and the send slope $sendSlope_{v,j}$. When no frame of class $j$ is being transmitted, the credit accumulates at rate $idleSlope_{v,j} > 0$. During transmission of a frame of class $j$, the credit decreases at rate $sendSlope_{v,j} \le 0$.\footnote{A 0 send slope means that no actual shaping is applied.} When the transmission queue is empty, the credit cannot accumulate beyond zero; it recovers toward zero at the idle slope rate but is capped at zero until a new frame arrives. The credit is bounded by
\begin{equation}
cr_{\min} \leq cr_{v,j} \leq cr_{\max},
\end{equation}
where $cr_{\max}\ge 0$ and $cr_{\min} < 0$ are the upper and lower credit bounds, respectively. Credit can go above zero when a frame is delayed by higher-priority traffic, as the idle slope continues to accumulate credit during the waiting period. The upper bound $cr_{\max}$ limits this accumulation to prevent the endpoint from transmitting a burst of back-to-back frames once the medium becomes available. A frame of class $j$ at endpoint $v$ is eligible for transmission only when $cr_{v,j} \geq 0$. If credit is negative, the endpoint must wait for a recovery time of
\begin{equation}
t_{\mathrm{wait}} = \frac{|cr_{v,j}|}{idleSlope_{v,j}}
\end{equation}
before the next frame becomes eligible.

The idle slope governs the long-term average transmission rate of the traffic class and is the primary configuration parameter of the CBS. The send slope is derived from the idle slope and the link capacity $C_e$ of the endpoint's egress link as
\begin{equation}
sendSlope_{v,j} = idleSlope_{v,j} - C_e.
\end{equation}

The vector of idle slopes at endpoint $v$, denoted $idleSlope_v = [idleSlope_{v,1}, \ldots, idleSlope_{v,|PCP|}]$, constitutes the CBS configuration introduced in the network model. The choice of idle slope values determines both the per-class throughput guarantee and the worst-case queue buildup at downstream switches. The mechanism can in principle be configured for any traffic class; the specific subset of classes that is shaped depends on the scheduling method, as detailed in Section~\ref{sec:SbDN}.

In the standard TSN deployment, CBS is defined per egress port at every switch in the network. In this work, as with TAS, CBS is configured and enforced at the source endpoints only. Throughout the remainder of this paper, CBS refers to this source-based variant unless stated otherwise.

\subsection{Problem Definition}
\label{sec:problem_definition}

Given a network $G = (V, E)$ and a set of flows $F$ as defined above, the primary objective is to configure the network such that every TC flow $f$ with $pcp_f = \mathrm{TC}$ meets its end-to-end deadline. Let $t^{\mathrm{arr}}_{f,i}$ denote the arrival time of a frame $i$ of flow $f$ at its destination. The deadline constraint requires:
\begin{equation}
t^{\mathrm{arr}}_{f,i} - t^{\mathrm{gen}}_f \leq d_f, \quad \text{for~all~} f \in F
\;\text{with}\; pcp_f = \mathrm{TC}.
\end{equation}

The mechanisms available to achieve this requirement are: (i) release-time assignment, which computes for each TC flow a release time $t^{\mathrm{rel}}_f \geq t^{\mathrm{gen}}_f$ at which the source endpoint is enforced to begin transmission, releasing the $n_f$ frames of each period consecutively from this time onward, with the constraint that all frames of the flow must reach their destination by $t^{\mathrm{gen}}_f + d_f$; (ii) TAS, which configures $\mathit{GCL}_v$ at each endpoint to partition time into segments where specific traffic classes are permitted to transmit; (iii) CBS, which configures $idleSlope_v$ (a single value per endpoint in this work, since we consider one BE class) at each endpoint to regulate the transmission rate of BE traffic through CBS; and (iv) strict-priority queuing.

The problem is to determine which combination of these mechanisms to employ, and with what configuration of their parameters ($t^{\mathrm{rel}}_f$, $\mathit{GCL}_v$, $idleSlope_v$), such that the deadline constraint is satisfied for every TC flow. Different combinations lead to different trade-offs between switch complexity, scheduling complexity, and the strength of the guarantees provided. In Section~\ref{sec:SbDN}, we present two architectural instantiations that address this problem using different subsets of the available mechanisms.

\section{SbDN Architecture}\label{sec:SbDN}
In this section, we propose two source-based architectural scenarios for deterministic communication, each employing a different combination of the mechanisms mentioned above. The architecture consists of three principal components: a central network controller that computes all scheduling configurations, source endpoints that enforce them locally, and commodity switches that serve as pure forwarding elements.

\subsection{Overview of the Methods}\label{sec:SbDN-overview}

All scheduling intelligence resides in the central network controller, which is composed of three cooperating agents. The TC Scheduling Agent computes the release time $t^{\mathrm{rel}}_f$ for each TC flow. The Network Partitioning Agent computes the gate control list $\mathit{GCL}_v$ for each endpoint. The BE Shaping Agent computes the idle slope vector $idleSlope_v$ for each endpoint. Once computed, these configurations are pushed to the endpoints, which enforce them locally: release time enforcement ensures that TC frames are transmitted precisely at their scheduled instants, source-based TAS gates traffic according to $\mathit{GCL}_v$, and source-based CBS regulates the transmission rate according to $idleSlope_v$. The switches remain unaware of these configurations and perform no scheduling. They operate as commodity forwarding elements with per-port queues, optionally supporting strict-priority queuing.

We assume that devices in the network are time-syn\-chronized. This can be achieved through hardware-based synchronization, where switches implement the Precision Time Protocol (PTP) \cite{linuxptp}, or through software-based solutions \cite{V-TSN}. The choice between the two depends on the desired accuracy. Importantly, hardware-based synchronization does not require full TSN-capable switches; it can be provided by commodity switches that support PTP, such as the FS IES3100-8TF \cite{FSIES3100}. This keeps the overall network cost low while preserving the timing accuracy needed for deterministic operation.

We present two methods that differ in how the available mechanisms are combined. Temporal Network Partitioning (TNP) relies on pure FIFO commodity switches and moves all scheduling responsibility to the source endpoints, combining release-time enforcement, source-based TAS, and source-based CBS. Traffic Prioritization (TP) relies on commodity switches with strict-priority queuing and simplifies the source side by combining release-time enforcement with source-based CBS only, removing the need for a source GCL. Table~\ref{tab:scenarios} summarizes the mechanisms employed by each scenario and the design trade-offs they present.

\begin{table}[h]
\centering
\caption{Mechanisms and trade-offs of the proposed scenarios}
\label{tab:scenarios}
\begin{tabular}{lcc}
\hline
\textbf{Aspect} & \textbf{TNP} & \textbf{TP} \\
\hline
Central scheduler agents & TC, NP, BE & TC, BE \\
Source enforcement & TC, NP, BE & TC, BE \\
Switch capability & FIFO only & Strict priority \\
Switch complexity & Minimum & Moderate \\
Scheduling complexity & High & Lower \\
\hline
\end{tabular}
\end{table}

Strict-priority queuing is a widely available feature in commodity Ethernet switches and does not require TSN support. Common enterprise-grade switches such as the Cisco Catalyst 2960 series \cite{Cisco2960} support strict-priority queuing, enabling TP on existing non-TSN infrastructure. When hardware-based time synchronization is also required, switches such as the FS IES3100-8TF provide both strict-priority queuing and PTP in a single cost-efficient platform.

\subsection{Temporal Network Partitioning}
\label{sec:scenario_a}

TNP supports networks with pure FIFO commodity switches; all scheduling responsibility is moved to the source endpoints. The central controller employs all three agents introduced earlier: the TC Scheduling Agent, the Network Partitioning Agent, and the BE Shaping Agent. The three agents operate in a pipeline, each building on the output of the previous one. The TC Scheduling Agent first assigns collision-free release times to all TC frames, as illustrated for the example workload in Fig.~\ref{fig:illust_tnp_perlink}. The Network Partitioning Agent then uses these reservations to synthesize a global gate control list $\mathit{GCL}$ enforced at each endpoint; the resulting cycle structure for the example is shown in Fig.~\ref{fig:illust_tnp_gcl}. Finally, the BE Shaping Agent computes the idle slope vector $idleSlope_v$ for each endpoint. The resulting configuration is pushed to the endpoints, which enforce it locally: TC frames are transmitted at their scheduled release times, BE transmission is gated by the source GCL, and the BE transmission rate is regulated by the CBS credit model.

Within each TAS cycle, the schedule contains multiple TC and BE phases that alternate based on the pattern of TC reservations produced by the TC Scheduling Agent, as in the cycle shown in Fig.~\ref{fig:illust_tnp_gcl}. A distinctive feature of TNP is that it is \textit{non-work-conserving} for BE traffic: BE frames cannot be transmitted during TC phases even if the medium is momentarily idle. This property is a deliberate trade-off. By enforcing strict temporal separation between TC and BE traffic, the architecture eliminates the possibility of BE frames interfering with TC frames at any switch, thereby guaranteeing deterministic TC latency on FIFO switches without requiring strict-priority queuing.

\subsubsection{TC Scheduling Agent}
\label{sec:release_time_agent}

The TC Scheduling Agent determines the release time $t^{\mathrm{rel}}_f$ of each TC flow $f \in F_{\mathrm{TC}}$, where $F_{\mathrm{TC}} = \{f \in F \mid pcp_f = \mathrm{TC}\}$. Since the frames within one flow within one transmission period are assumed to arrive in a burst (Section \ref{sec:workload-model}), the TC Scheduling Agent releases those frames consecutively, spaced by the transmission time of the flow on the slowest link of its route; this is the smallest spacing that guarantees that frames never queue behind each other anywhere on the route. Hence, it suffices to determine a release time for a flow and then reserve a sufficiently large time segment on every link for the transmission of all its frames. The objective is to ensure that no two TC frames occupy the same link at the same time.

Because different flows traverse different routes, a collision-free assignment must be verified on every link individually. The agent maintains a per-link reservation timeline $T_e$ for each link $e \in E$, implemented as an interval tree. Each interval $[t_{start}, t_{end}) \in T_e$ represents a time window during which a specific TC frame occupies link $e$, where $t_{start}$ and $t_{end}$ denote the start and end times of the reservation, respectively. A candidate release time $t^{\mathrm{rel}}_f$ is admissible if, and only if, for every link $e$ on the route $R_f$, the resulting reservation interval does not overlap with any existing interval in $T_e$.

The agent processes TC flows one at a time following Earliest Deadline First (EDF) \cite{EDF} ordering: flows with tighter deadlines are scheduled first, since they have less flexibility in their placement. For each flow, the agent runs a \emph{forbidden-interval scan} (Algorithm~\ref{alg:rt_agent}) that finds the earliest admissible release time along the entire route.

For a flow $f$ with route $R_f = (e_1, e_2, \ldots, e_{|R_f|})$,  let
\begin{equation}\label{eq:taumax}
\tau^{\max}_f = \max_{e \in R_f} \tau_{f,e} = \frac{L_f}{\min_{e \in R_f} C_e}
\end{equation}
denote the transmission time of the flow on the slowest link of its route, with $\tau_{f,e} = L_f / C_e$ the transmission time of one frame on link $e$.\footnote{The frame size $L_f$ is taken to include the mandatory 96-bit Ethernet inter-frame gap, so $\tau_{f,e}=L_f/C_e$ accounts for the required minimum spacing between consecutive frames.} The agent releases the $n_f$ frames of one transmission period spaced by this slowest-link transmission time: the $i$-th frame, $i \in \{0, \ldots, n_f - 1\}$, is released at
\begin{equation}\label{eq:frame_release}
t^{\mathrm{rel}}_{f,i} = t^{\mathrm{rel}}_{f} + i \cdot \tau^{\max}_f.
\end{equation}
This spacing guarantees that no frame ever queues behind an earlier frame of its own flow: frame $i$ becomes available at each link of the route exactly $\tau^{\max}_f$ after frame $i-1$, while frame $i-1$ occupies link $e_k$ for only $\tau_{f,e_k} \leq \tau^{\max}_f$. Every frame therefore traverses the route without queueing, and frame $i$ arrives at the beginning of link $e_k$ at time $t^{\mathrm{rel}}_{f,i} + \omega_f(k)$, where
\begin{equation}\label{eq:offset}%
\omega_f(k) = \sum_{j=1}^{k-1} \left( \tau_{f,e_j} + \delta_{s_j} \right),
\end{equation}
with $\delta_{s_j}$ the processing delay at the switch that forwards the frame from link $e_j$ to link $e_{j+1}$. Frame $i$ occupies link $e_k$ during the interval $[t^{\mathrm{rel}}_{f,i} + \omega_f(k),\; t^{\mathrm{rel}}_{f,i} + \omega_f(k) + \tau_{f,e_k})$.
Rather than tracking each frame individually, the agent represents the $n_f$ frames of one transmission period as a single aggregate reservation of duration
\begin{equation}\label{eq:lambda}
\Lambda_f = n_f \cdot \tau^{\max}_f
\end{equation}
on every link of the route. Since the first frame starts transmission on link $e_k$ at $t^{\mathrm{rel}}_{f} + \omega_f(k)$ and the last frame completes there at $t^{\mathrm{rel}}_{f} + (n_f - 1)\,\tau^{\max}_f + \omega_f(k) + \tau_{f,e_k} \leq t^{\mathrm{rel}}_{f} + \omega_f(k) + \Lambda_f$, this equal-size interval covers the transmission of the entire burst on every link. This keeps the per-link interval trees compact. The reservation is exact on the slowest link of the route and, in particular, on every link when all link capacities are equal, as in our examples and evaluation; on faster links, it is slightly conservative.

For any reservation $[t_{start}, t_{end}) \in T_{e_k}$ on link $e_k$, the set of release times $t^{\mathrm{rel}}_{f}$ for which the aggregate reservation of $f$ on $e_k$ would overlap with $[t_{start}, t_{end})$ forms the forbidden range
\begin{align}
\mathcal{B}_{e_k}(&t_{start}, t_{end}) = \\ \nonumber &\left[ t_{start} - \omega_f(k) - \Lambda_f,\; t_{end} - \omega_f(k) \right).
\end{align}

The union of all forbidden ranges across all reservations on links in $R_f$, denoted $\mathcal{B}_f$, defines the set of release times that would cause a collision for flow $f$. Since the cycle duration $T_{\mathrm{cyc}}$ is the hyperperiod of the TC flows, flow $f$ transmits its burst once per transmission period and therefore $T_{\mathrm{cyc}}/T_f$ times per cycle. The $q$-th instance, $q \in \{0, \ldots, T_{\mathrm{cyc}}/T_f - 1\}$, is released at $t^{\mathrm{rel}}_f + q \cdot T_f$ and hence occupies every link of the route $q \cdot T_f$ later than instance $0$. A reservation $[t_{start}, t_{end})$ on a link therefore contributes one forbidden range per instance, namely $\mathcal{B}_{e_k}(t_{start}, t_{end})$ shifted by $-q \cdot T_f$: if the base release time $t^{\mathrm{rel}}_f$ were in this shifted range, then instance $q$, released $q \cdot T_f$ later, would fall exactly into the reservation. The scan, Algorithm~\ref{alg:rt_agent}, accumulates these ranges over all reservations and all instances, keeping $\mathcal{B}_f$ sorted and merged (lines~4--6). It then selects the earliest admissible release time (line~7): a flow is admitted only if there is a release time $t^{\mathrm{rel}}_f \geq t^{\mathrm{gen}}_f$ with $t^{\mathrm{rel}}_f \notin \mathcal{B}_f$ at which even the last frame of the burst reaches the destination within the deadline; that is,
\begin{equation}
t^{\mathrm{rel}}_f + (n_f - 1)\,\tau^{\max}_f + \omega_f(|R_f|) + \tau_{f,e_{|R_f|}} \leq t^{\mathrm{gen}}_f + d_f.
\label{eq:admission_tnp}
\end{equation}
If no such release time exists, the flow is declared infeasible (lines~8--9). Once a valid $t^{\mathrm{rel}}_f$ is found, the agent inserts the aggregate reservations of all $T_{\mathrm{cyc}}/T_f$ instances into the timeline of every link in $R_f$ (lines~10--12). Subsequent flows are scheduled against the updated timelines, and the outer loop (line~2) repeats until all TC flows have been processed.

\begin{algorithm}[t]
\caption{Release-Time Assignment}
\label{alg:rt_agent}
\begin{algorithmic}[1]
\Require Set of TC flows $F_{\mathrm{TC}}$, network $G = (V, E)$, cycle duration $T_{\mathrm{cyc}}$
\Ensure Release time $t^{\mathrm{rel}}_f$ for every admitted flow $f \in F_{\mathrm{TC}}$ and reservation timelines $T_e$ for every $e \in E$
\State $T_e \gets \emptyset$ for every $e \in E$;\quad sort $F_{\mathrm{TC}}$ in EDF order
\For{each flow $f \in F_{\mathrm{TC}}$}
    \State $\tau^{\max}_f \gets \max_{e \in R_f} \tau_{f,e}$;\quad $\Lambda_f \gets n_f \cdot \tau^{\max}_f$;\quad $\mathcal{B}_f \gets \emptyset$
    \For{each link $e_k \in R_f$}
        \For{$[t_{start}, t_{end}) \in T_{e_k}$,  $q \in \{0, \ldots, \frac{T_{\mathrm{cyc}}}{T_f} - 1\}$}
            \State \begin{tabular}[t]{@{\hspace{-3pt}}l}Insert $\mathcal{B}_{e_k}(t_{start}, t_{end}) - q \cdot T_f$ into $\mathcal{B}_f$, keep-\\
            ing $\mathcal{B}_f$ sorted and merging overlapping ranges\end{tabular}
        \EndFor
    \EndFor
    \State $t^{\mathrm{rel}}_f \gets$ earliest $t \geq t^{\mathrm{gen}}_f$ with $t \notin \mathcal{B}_f$ satisfying Eq.~\eqref{eq:admission_tnp}
    \If{no such $t$ exists}
        \State Declare $f$ infeasible; \textbf{continue} with the next flow
    \EndIf
    \For{each $e_k \in R_f$, $q \in \{0, \ldots, \frac{T_{\mathrm{cyc}}}{T_f} - 1\}$}
        \State $start_{q,k} \gets t^{\mathrm{rel}}_f + q \cdot T_f + \omega_f(k)$
        \State $T_{e_k} \gets T_{e_k} \cup \{[\,start_{q,k},\; start_{q,k} + \Lambda_f\,)\}$  
    \EndFor
\EndFor
\end{algorithmic}
\end{algorithm}

After all TC flows have been processed, the per-link timelines jointly define a collision-free schedule for the admitted flows: for every link $e \in E$, the intervals in $T_e$ are pairwise disjoint. Consequently, at most one admitted TC frame occupies any link at any time, and no two TC frames can arrive at the same switch output port simultaneously. This eliminates contention between TC streams at every switch and guarantees that TC frames do not queue behind one another anywhere in the network. Handling infeasible flows, for example by revising the application quality, is the task of a supervisory controller and out of the scope of this work.

\paragraph*{Complexity}
Regarding computational complexity, the cost of the agent is governed by the number of reservations, which depends on the hyperperiod: each admitted flow $f$ contributes $|R_f| \cdot T_{\mathrm{cyc}}/T_f$ aggregate base reservations, giving a total of $r = \sum_{f \in F_{\mathrm{TC}}} |R_f| \cdot T_{\mathrm{cyc}}/T_f$ base reservations across all timelines if all flows are admitted. In addition, the scan creates derived reservations, the forbidden ranges of $\mathcal{B}_f$: when flow $f$ is placed, every base reservation on the links of its route contributes one range for each of the $T_{\mathrm{cyc}}/T_f$ transmission instances of $f$ in a hyperperiod. Hence, at most $r \cdot T_{\mathrm{cyc}}/T_f$ derived ranges are added per flow. So, the total number of derived reservations over all flows that are added is at most $r \cdot \sum_{f \in F_{\mathrm{TC}}} T_{\mathrm{cyc}}/T_f \leq r^{2}$, a bound that is essentially attained when all flows share a single link.

Using balanced binary search trees (BSTs) for both the $\mathcal{B}_f$ and $T_e$, the loop of line 2 essentially inserts at most $r + r^2$ reservations in BSTs of size at most $r + r^2$, which has worst-case complexity $O\big((r+r^2)\log(r+r^2)\big) = O(r^{2}\log r)$; with low contention, the number of derived ranges added to $\mathcal{B}_f$ approaches $r$ and the cost approaches $O(r \log r)$. 

\subsubsection{Network Partitioning Agent}
\label{sec:tas_agent}

The Network Partitioning Agent uses the per-link reservation timelines $T_e$ produced by the TC Scheduling Agent to synthesize a gate control list. These timelines admit two constructions. The first projects all TC reservations onto a single time axis and enforces the resulting GCL identically at every endpoint; we refer to this as the \emph{global} GCL. The second synthesizes a more compact $\mathit{GCL}_v$ per endpoint, retaining only the TC segments in which that endpoint transmits and closing the remainder. The two are not equivalent in general: because a BE frame emitted by an endpoint may traverse several links before reaching its destination, an endpoint must keep its BE gate closed during any TC phase whose frames share a link with that endpoint's BE routes, not merely during TC activity on its own egress link. In topologies dominated by a shared bottleneck, this condition forces every endpoint to close BE during every TC phase, so the per-endpoint GCLs differ from the global GCL only in their number of segments, not in the time available for BE; with greater path diversity, per-endpoint GCLs can keep BE open during TC phases that do not touch an endpoint's BE routes, improving BE throughput at the cost of a per-link drain analysis and per-endpoint synthesis. The current version of SbDN adopts the global GCL: enforcing one identical GCL at every endpoint guarantees that BE is closed network-wide whenever any TC frame is in flight, which admits the single network-wide drain argument of Section~\ref{sec:scenario_a_proof}. This choice requires all endpoints to share a common notion of time, since any misalignment could allow BE traffic from one endpoint to leak into a TC phase at a shared downstream link.

Within each cycle of duration $T_{\mathrm{cyc}}$, the schedule alternates between TC phases and BE phases based on where TC reservations appear on the timelines. For transition from a TC phase to a BE phase, no guard band is required, because the TC Scheduling Agent reserves time on every link of every TC flow's route. When the last TC reservation of a TC phase ends, no TC frame is left in flight anywhere in the network. BE transmission can therefore begin immediately.

In the transition from a BE phase to a TC phase, however, a guard band is required. During the BE phase, BE frames may be in flight at any switch in the network. Before the next TC phase begins, all in-flight BE frames must reach their destinations to avoid interfering with TC traffic at any downstream switch. We denote the required guard band duration by $g_{\mathrm{be}}$.

The goal of $g_{\mathrm{be}}$ is to guarantee that the network is fully drained of BE traffic before any TC phase starts. This requires bounding both the transmission time of BE frames along their routes and the worst-case queue buildup they may experience at intermediate switches. The formula for $g_{\mathrm{be}}$ is therefore coupled to the CBS configuration, which determines the maximum BE queue depth at each switch. The derivation of the CBS idle slope is presented in Section~\ref{sec:cbs_agent}, and the formal proof that the combination of $g_{\mathrm{be}}$ and the chosen idle slope guarantees a fully drained network is presented in Section~\ref{sec:scenario_a_proof}. The guard band is given by
\begin{equation}
\begin{split}
g_{\mathrm{be}} = \max_{f \in F_{\mathrm{BE}}} \Bigg[ 
& \sum_{e \in R_f} \tau_{f,e} \;+ \\
& \sum_{s \in S_f} \Big( \delta_s + (\beta_{\mathrm{BE}}(e_{\mathrm{out}}^{s,f}) - 1) \cdot \tau^{\mathrm{BE}}_{e_{\mathrm{out}}^{s,f}} \Big) \Bigg].
\end{split}
\label{eq:guard_be}
\end{equation}

Intuitively, $g_{\mathrm{be}}$ upper-bounds the time the worst-case BE frame needs to leave the network. The first sum is the frame's own transmission along every link of its route. The second sum adds, at each switch on the way, the processing delay $\delta_s$ and the time to clear the BE frames that may already sit ahead of it at that output port. $F_{\mathrm{BE}} = \{f \in F \mid pcp_f = \mathrm{BE}\}$ is the set of BE flows, $S_f$ the intermediate switches along $R_f$, $e_{\mathrm{out}}^{s,f}$ the output link of switch $s$ on the path of $f$, and $\tau^{\mathrm{BE}}_e = L^{\mathrm{max}}_{\mathrm{BE}}/C_e$ the transmission time of a maximum-size BE frame on $e$.

$\beta_{\mathrm{BE}}(e)$ is the \emph{BE burst bound} for link $e$: the worst-case number of BE frames that can simultaneously arrive at the port feeding $e$.
This number is bounded by both the number of BE flows and the number of incoming links. Hence,
\begin{equation}
\beta_{\mathrm{BE}}(e) = \min\big(K_{\mathrm{BE}}(e),\, d_{\mathrm{in}}(e)\big),
\label{eq:beta_be}
\end{equation}
where $K_{\mathrm{BE}}(e) = |\{f \in F_{\mathrm{BE}} \mid e \in R_f\}|$ is the number of BE flows traversing $e$ and $d_{\mathrm{in}}(e)$ is the number of input links of the switch feeding $e$. Together with the BE send rates resulting from the traffic shaping (defined below in \ref{sec:cbs_agent}) this provides a bound on the queue depth (as shown in \ref{sec:scenario_a_proof}).

The per-switch term in \eqref{eq:guard_be} adds the processing delay $\delta_s$ and the queuing delay sequentially. In practice, these can partly overlap, since a switch may receive and process a frame while its output link is still transmitting the frames queued ahead of it. The exact overlap depends on the internal pipeline of the switch. The hardware-independent model used in this version of SbDN takes the conservative sum, which is a valid upper bound for any commodity switch. When the pipeline of the target switch is known, this term can be replaced by a tighter switch-specific bound.

Given the guard band, the Network Partitioning Agent constructs the GCL in two passes, given in full in Algorithm~\ref{alg:tas_agent}. The first pass projects all TC reservations from every timeline onto a single time axis (line~1), sorting them by start time, and groups reservations into clusters (lines~2--10): two reservations are merged into the same cluster when the gap between them is smaller than $g_{\mathrm{be}} + \Delta_{\mathrm{min}}$ (lines~5--6), where $\Delta_{\mathrm{min}}$ is the minimum useful BE window duration; otherwise the current cluster is closed and a new one is started (lines~7--9). Note that the reservations of one flow on consecutive links of its route are separated by at most the processing delay of the switch between them; choosing $\Delta_{\mathrm{min}} \geq \max_{s \in V_{\mathrm{sw}}} \delta_s$ therefore guarantees that these gaps are always merged into a single TC phase, so no BE phase can open while a TC frame resides inside a switch. The second pass initializes an empty GCL (lines~11--12) and walks the clusters in order (line~13): wherever the gap before a cluster is large enough, it inserts a BE phase followed by a guard band of duration $g_{\mathrm{be}}$ (lines~14--16) and then appends the cluster as a TC segment (line~17). If a final BE phase and guard band fit, these are appended to fill any remaining gap to the end of the cycle (lines~19--21); otherwise, the GCL is completed with a guard band of the remaining time frame only (lines~22--23).

\begin{algorithm}[h]
\caption{GCL Synthesis}
\label{alg:tas_agent}
\begin{algorithmic}[1]
\Require Per-link timelines $\{T_e\}_{e \in E}$, cycle duration $T_{\mathrm{cyc}}$, guard band $g_{\mathrm{be}}$, minimum BE window $\Delta_{\mathrm{min}}$
\Ensure Global gate control list $\mathit{GCL}$
\State \begin{tabular}[t]{@{}l}$\mathcal{I} \gets$ union of all intervals across $\{T_e\}_{e \in E}$, \\ sorted by start time\end{tabular}
\State $\mathcal{C} \gets \emptyset$ \Comment{set of TC clusters}
\State $[t_{start}, t_{end}) \gets$ first interval in $\mathcal{I}$
\For{each subsequent interval $[t'_{start}, t'_{end}) \in \mathcal{I}$}
    \If{$t'_{start} - t_{end} < g_{\mathrm{be}} + \Delta_{\mathrm{min}}$}
        \State $t_{end} \gets \max(t_{end}, t'_{end})$ \Comment{\begin{tabular}[t]{@{}l}merge into current \\ cluster\end{tabular}}
    \Else
        \State $\mathcal{C} \gets \mathcal{C} \cup \{[t_{start}, t_{end})\}$
        \State $[t_{start}, t_{end}) \gets [t'_{start}, t'_{end})$
    \EndIf
\EndFor
\State $\mathcal{C} \gets \mathcal{C} \cup \{[t_{start}, t_{end})\}$
\State Initialize $GCL \gets$ empty list of segments
\State $t_{\mathrm{cur}} \gets 0$
\For{each cluster $[t_{start}, t_{end}) \in \mathcal{C}$}
    \If{$t_{\mathrm{cur}} < t_{start}$}
        \State \begin{tabular}[t]{@{}l}Append BE segment of duration \\ \qquad\qquad $t_{start} - t_{\mathrm{cur}} - g_{\mathrm{be}}$ to $\mathit{GCL}$\end{tabular}
        \State Append guard band of duration $g_{\mathrm{be}}$ to $\mathit{GCL}$
    \EndIf
    \State Append TC segment of duration $t_{end} - t_{start}$ to $\mathit{GCL}$
    \State $t_{\mathrm{cur}} \gets t_{end}$
\EndFor
\If{$t_{\mathrm{cur}} < T_{\mathrm{cyc}}-(g_{\mathrm{be}} + \Delta_{\mathrm{min}})$}
    \State \begin{tabular}[t]{@{}l}Append BE segment of duration \\ \qquad\qquad  $T_{\mathrm{cyc}} - t_{\mathrm{cur}} - g_{\mathrm{be}}$ to $\mathit{GCL}$\end{tabular}
    \State Append guard band of duration $g_{\mathrm{be}}$ to $\mathit{GCL}$
\Else
    \State \begin{tabular}[t]{@{}l}Append guard band of duration $T_{\mathrm{cyc}} - t_{\mathrm{cur}}$\\ \qquad\qquad  till end of cycle to $\mathit{GCL}$\end{tabular}
\EndIf
\end{algorithmic}
\end{algorithm}
The resulting GCL is pushed identically to every endpoint. Because all endpoints share the same cycle structure and TC reservations are already collision-free, the GCL enforces temporal separation between TC and BE traffic throughout the entire network while preserving the release times computed by the TC Scheduling Agent. 

\paragraph*{Complexity} The complexity of Algorithm~\ref{alg:tas_agent} is determined by the merge of the $r$ TC reservations collected from all timelines (line~1), the clustering pass, and the segment-construction pass. All three are linear in $r$, assuming the $T_e$ are sorted to start with. As explained for the TC Scheduling Agent, this cost $r$ depends on the hyperperiod of all the flows.

\subsubsection{BE Shaping Agent}
\label{sec:cbs_agent}
The BE Shaping Agent computes the idle slope vector $idleSlope_v$ enforced at each endpoint. In TNP, the idle slope for TC traffic is irrelevant because TC frames are released at fixed times determined by the TC Scheduling Agent and gated by the TC segments of the GCL. The BE Shaping Agent therefore focuses on configuring $idleSlope_{v, \mathrm{BE}}$ for every endpoint $v$, that is, the idle slope that governs BE transmission.

The choice of idle slope directly determines the worst-case number of BE frames that can accumulate at any switch during a BE phase. A larger idle slope allows an endpoint to transmit more frequently, which increases BE throughput but also increases the burst size arriving at downstream switches. A smaller idle slope reduces the burst at the cost of throughput. The BE Shaping Agent must choose an idle slope that keeps the worst-case queue depth bounded.

To formalize this trade-off, we consider the bottleneck link experienced by each endpoint along its BE paths. For an endpoint $v$, let $\mathcal{E}_v$ denote the set of links traversed by any BE flow originating at $v$. Let $N_{\mathrm{BE}}(e) = \sum_{f \in F_{\mathrm{BE}}, e \in R_f} n_f\cdot L_f$ be the total number of BE bits traversing link $e$ per cycle. The bottleneck link of endpoint $v$ is the most congested link along its BE paths. Formally,
\begin{equation}\label{eq:bnv}
b(v) = \arg\min_{e \in \mathcal{E}_v} \frac{C_e}{N_{\mathrm{BE}}(e)}.
\end{equation}

Let $N_{\mathrm{BE}}(v)=\sum_{f \in F_{\mathrm{BE}}, v^{\mathrm{src}}_f=v} n_f\cdot L_f$ be the number of BE bits sent from endpoint $v$ per cycle. The BE Shaping Agent assigns the BE idle slope of endpoint $v$ as
\begin{equation}
idleSlope_{v, \mathrm{BE}} = \frac{N_\mathrm{BE}(v)\cdot C_{b(v)}}{N_{\mathrm{BE}}(b(v))}.
\label{eq:idle_slope}
\end{equation}

This assignment has a clear intuition. At the bottleneck link $b(v)$, the aggregate BE rate of allendpoints sharing this link equals at most $C_{b(v)}$. The link therefore operates never above its capacity. Sustained queue growth at any switch in the network is therefore impossible (as shown below).

The computed idle slopes are pushed to the endpoints together with the GCL produced by the Network Partitioning Agent. Each endpoint enforces the idle slope locally according to the CBS credit model defined in Section~\ref{sec:cbs_model}, ensuring that its BE transmission rate never exceeds its assigned share of its bottleneck link. Computing all idle slopes requires a single pass over the BE flows and their routes and is therefore linear in the total size of the BE routes. The formal justification that the resulting configuration bounds the BE queue depth at every switch, and that the guard band in Equation~\eqref{eq:guard_be} is therefore sufficient to drain the network, is presented next.
\subsubsection{Formal Guarantees}
\label{sec:scenario_a_proof}
We now prove that the combined configuration produced by the three agents in the TNP method guarantees deterministic delivery of all admitted TC flows. The TC Scheduling Agent already eliminates all TC-versus-TC contention by ensuring that no two TC frames occupy the same link simultaneously. What remains to be shown is that BE traffic cannot interfere with TC traffic. Specifically, we must prove that at the end of every guard band, all in-flight BE frames have reached their destinations, so that the network is completely free of BE traffic when the next TC phase begins. The argument proceeds in three steps. The \emph{Setup} shows that no link is overloaded. The \emph{Queue bound} then uses this to bound the number of BE frames queued at any switch port. Proposition~\ref{thm:drain} then uses that bound to show that the guard band $g_{\mathrm{be}}$ of Equation~\eqref{eq:guard_be} fully drains the network before each TC phase begins, and Theorem~\ref{thm:deadline} concludes that all admitted TC flows meet their deadlines.

\paragraph{Setup - links are not overloaded}
Consider any link $e \in E$.
Let $v$ be an endpoint that sends a BE flow through $e$. By construction of the BE Shaping Agent, the idle slope at $v$ is $idleSlope_{v, \mathrm{BE}} = N_{\mathrm{BE}}(v)\cdot C_{b(v)} / N_{\mathrm{BE}}(b(v))$ (Equation \eqref{eq:idle_slope}). Because $b(v)$ is the bottleneck of $v$'s BE paths as defined in Equation \eqref{eq:bnv}, $C_e / N_{\mathrm{BE}}(e) \geq C_{b(v)} / N_{\mathrm{BE}}(b(v))$. Hence,
\begin{equation}
idleSlope_{v, \mathrm{BE}} \leq \frac{N_{\mathrm{BE}}(v)\cdot C_e}{N_{\mathrm{BE}}(e)}.
\label{eq:slope_bound}
\end{equation}

Inequality~\eqref{eq:slope_bound} holds for every endpoint contributing to link $e$. Given that $\Sigma_{v\in V^{ep},f\in F_{\mathrm{BE}},v^{src}_f=v,e\in R_f}N_{\mathrm{BE}}(v)=N_{\mathrm{BE}}(e)$, the aggregate BE rate arriving at $e$ is at most $N_{\mathrm{BE}}(e) \cdot C_e / N_{\mathrm{BE}}(e) = C_e$. The link therefore operates at or below capacity at all times.

\paragraph{Queue bound}
Consider the output port feeding link $e$. At the start of a BE phase the network holds no BE traffic, so its queue builds only from frames released afterward. Two facts bound its depth. First, \emph{bounded simultaneous arrivals}: at most $\beta_{\mathrm{BE}}(e)$ BE frames can be present at the port at once, since no more than $K_{\mathrm{BE}}(e)$ distinct BE flows feed $e$ and no more than $d_{\mathrm{in}}(e)$ frames can arrive simultaneously through the feeding switch's BE-carrying input ports, giving $\beta_{\mathrm{BE}}(e) = \min(K_{\mathrm{BE}}(e), d_{\mathrm{in}}(e))$ as in Equation~\eqref{eq:beta_be}. Second, \emph{bounded refill}: after sending a frame, each source's CBS credit turns negative and must recover before the source may send again, so no source contributes more than one frame to a simultaneous burst. Together with the Setup argument, showing that the aggregate BE arrival rate at $e$ never exceeds the service rate $C_e$, the port clears each burst before a new one can form. From these two facts, at most $\beta_{\mathrm{BE}}(e) - 1$ frames are ever queued ahead of a frame in service.
\begin{proposition}[Network Drain]
\label{thm:drain}
If the guard band duration is set according to Equation~\eqref{eq:guard_be}, then, at the end of the guard band, every BE frame that was in flight at the end of the preceding BE phase has reached its destination.
\end{proposition}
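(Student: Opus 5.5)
The plan is to fix an arbitrary BE frame $\phi$ of some flow $f \in F_{\mathrm{BE}}$ that is still in the network at the instant $t_0$ at which a BE phase ends and the guard band begins. I will show that $\phi$ leaves the network no later than $t_0 + g_f$, where
\[
g_f = \sum_{e \in R_f} \tau_{f,e} + \sum_{s \in S_f}\Big(\delta_s + (\beta_{\mathrm{BE}}(e_{\mathrm{out}}^{s,f})-1)\,\tau^{\mathrm{BE}}_{e_{\mathrm{out}}^{s,f}}\Big)
\]
is the bracketed term of Equation~\eqref{eq:guard_be}. Taking the maximum over $f$ then gives the claim. Two observations fix the setting. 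First, the GCL is identical at all endpoints and closes BE throughout the guard band. Second, TC segments appear only after the guard band ends, by the construction in Algorithm~\ref{alg:tas_agent}, and the preceding TC phase has fully ended, so no TC frame is in flight. Hence during $[t_0, t_0+g_{\mathrm{be}})$ no new frame of either class enters the network, and only the frames already present are forwarded.

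The core is a per-hop delay bound obtained by walking $\phi$ along $R_f=(e_1,\dots,e_{|R_f|})$. Suppose at $t_0$ the frame is at some position on the route: in the source queue or in transmission on $e_1$, inside switch $s_j$, or queued at the output port of $e_{j+1}$. The worst case is that $\phi$ has just started its transmission on $e_1$ at $t_0$. Any later position has already consumed part of the budget, and the remaining terms of $g_f$ dominate what is left. To make this precise I will argue a monotone statement: if $\phi$ is at hop $k$ at time $t$, its residual time to destination is at most the suffix of $g_f$ starting at hop $k$. I prove this by backward induction on $k$. At each switch $s$ the frame spends $\delta_s$ in processing. It then waits at the output port of $e_{\mathrm{out}}^{s,f}$ behind at most $\beta_{\mathrm{BE}}(e_{\mathrm{out}}^{s,f})-1$ BE frames, by the Queue bound established just before the proposition. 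Each of those frames needs at most $\tau^{\mathrm{BE}}_{e}$ on that link. Finally $\phi$ itself takes $\tau_{f,e}$ to transmit. A frame that is already partly transmitted or partly processed at $t_0$ only has less remaining work. Summing the per-hop terms yields $g_f$, and the residual bound at the source hop is at most $g_f \le g_{\mathrm{be}}$.

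The main obstacle is justifying that the Queue bound still applies to the frames queued ahead of $\phi$ during the guard band, and that no TC frame sits ahead of it in a FIFO queue. The TC part follows from the TC-to-BE transition argument: every TC reservation has ended before the BE phase opens, and clusters absorb the inter-hop gaps because $\Delta_{\mathrm{min}} \ge \max_s \delta_s$. For the BE part, the Queue bound was derived for the entire BE phase using the Setup (no link is overloaded) and the CBS bounded-refill property. During the guard band arrivals only decrease, since sources are silent, so the number of frames ahead of $\phi$ at any port can only shrink. The step I expect to need the most care is this: when $\phi$ joins a downstream queue later during the guard band, frames that arrived there from other upstream branches may precede it, and the count must still be at most $\beta_{\mathrm{BE}}(e)-1$. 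I would handle this by invoking the Queue bound as a network-wide invariant on every port at every time within the BE phase and guard band, rather than only at $t_0$. The invariant is that at most $\beta_{\mathrm{BE}}(e)$ BE frames are present at the port, at most one per feeding input link. That suffices for the induction.
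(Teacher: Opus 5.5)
Your proposal is correct and follows essentially the same route as the paper's proof. The worst case is a BE frame leaving its source exactly when the BE phase ends; it is charged $\tau_{f,e}$ per link, plus $\delta_s$ and at most $(\beta_{\mathrm{BE}}(e_{\mathrm{out}}^{s,f})-1)\,\tau^{\mathrm{BE}}_{e_{\mathrm{out}}^{s,f}}$ of queueing per switch via the Queue bound, and the total is maximized over $F_{\mathrm{BE}}$; your backward induction over the frame's position at $t_0$, your check that no TC frame can sit ahead of it in a FIFO queue, and your use of the Queue bound as an invariant throughout the guard band make explicit what the paper leaves implicit (the induction only needs the count of at most $\beta_{\mathrm{BE}}(e)-1$ frames ahead, not the stronger one-per-input-link form you state).
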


\begin{proof}
Let $T$ denote the end of the BE phase. Any BE frame still in the network at time $T$ must belong to some BE flow $f \in F_{\mathrm{BE}}$. The worst case is a frame that began transmission from its source at time $T$ exactly and must still traverse its entire route $R_f$ for some longest route $R_f$.

Along each link $e \in R_f$, the frame spends its transmission time $\tau_{f,e}$. At each intermediate switch $s \in S_f$, the frame incurs the processing delay $\delta_s$ and may queue behind other BE frames at the output port. By the queue bound, which combines the bounded simultaneous arrivals and bounded refill at the port, at most $\beta_{\mathrm{BE}}(e_{\mathrm{out}}^{s,f}) - 1$ such frames are ahead of it, each of at most $L^{\mathrm{max}}_{\mathrm{BE}}$ bits, so the worst-case queuing delay there is $(\beta_{\mathrm{BE}}(e_{\mathrm{out}}^{s,f}) - 1) \cdot \tau^{\mathrm{BE}}_{e_{\mathrm{out}}^{s,f}}$.

The total travel time of the frame is therefore at most
\begin{equation}
\sum_{e \in R_f} \tau_{f,e}
+ \sum_{s \in S_f} \left( \delta_s + (\beta_{\mathrm{BE}}(e_{\mathrm{out}}^{s,f}) - 1) \cdot \tau^{\mathrm{BE}}_{e_{\mathrm{out}}^{s,f}} \right).
\end{equation}

Taking the maximum of this expression over all BE flows yields exactly $g_{\mathrm{be}}$ of Equation~\eqref{eq:guard_be}. Therefore, by time $T + g_{\mathrm{be}}$, the frame has reached its destination. Since this argument holds for the worst-case BE frame in flight at time $T$, every BE frame in flight at time $T$ has reached its destination by time $T + g_{\mathrm{be}}$.
\end{proof}

\begin{theorem}[TC Deadline Satisfaction]
\label{thm:deadline}
Under the TNP configuration produced by the three agents, every admitted TC flow $f \in F_{\mathrm{TC}}$ meets its end-to-end deadline: $t^{\mathrm{arr}}_{f,i} - t^{\mathrm{gen}}_f \leq d_f$ for every frame $i$ of $f$.
\end{theorem}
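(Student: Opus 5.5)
The plan is to show that every frame of an admitted TC flow traverses its route with no queueing at all, so that its arrival time is given exactly by the release-time formula, and then to invoke the admission condition \eqref{eq:admission_tnp}. Fix an admitted flow $f$, a period instance $q$, and a frame $i\in\{0,\ldots,n_f-1\}$. The source releases the frame at $t^{\mathrm{rel}}_{f,i}+qT_f$. Release-time enforcement guarantees this happens during the TC segment that covers the reservation. The TC gate is open there, and Algorithm~\ref{alg:tas_agent} places every reservation interval inside some cluster, hence inside a TC segment.

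By induction over the hops $k=1,\ldots,|R_f|$, I would prove that the frame reaches link $e_k$ at $t^{\mathrm{rel}}_{f,i}+qT_f+\omega_f(k)$ and finds that output port empty. A frame could be found waiting at that port from only three sources.
\begin{enumerate}
\item \emph{Earlier frames of $f$ itself.} These are excluded by the spacing $\tau^{\max}_f\ge\tau_{f,e_k}$ of \eqref{eq:frame_release}, as argued after that equation.
\item \emph{Frames of another TC flow $f'$.} Such a frame occupying $e_k$ would lie in a reservation of $T_{e_k}$. Those reservations are pairwise disjoint by construction of Algorithm~\ref{alg:rt_agent}, since the forbidden ranges cover all instances. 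The aggregate interval $[\,\cdot+\omega_f(k),\,\cdot+\omega_f(k)+\Lambda_f)$ contains the frame's transmission on $e_k$. The induction hypothesis says the other TC frames also follow their reservations exactly, so none occupies or is queued at the port. The induction therefore runs over time and over all TC frames simultaneously, ordered by arrival at ports.
\item \emph{BE frames.} The frame's presence on $e_k$ lies inside a reservation, hence inside a cluster and thus a TC segment. The choice $\Delta_{\min}\ge\max_s\delta_s$ keeps all consecutive-hop gaps inside the same cluster. Every cluster preceded by a BE segment is separated from it by a guard band of length $g_{\mathrm{be}}$. No endpoint emits BE traffic during guard bands or TC segments, because the GCL is global and the clocks are synchronized. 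Proposition~\ref{thm:drain} then gives that no BE frame is anywhere in the network throughout the TC segment.
\end{enumerate}

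Since the port is empty, the frame is transmitted immediately, taking $\tau_{f,e_k}$, and then processed for $\delta_{s_k}$. This yields arrival at $e_{k+1}$ at offset $\omega_f(k+1)$ and closes the induction. The last frame therefore arrives at the destination at $t^{\mathrm{rel}}_f+qT_f+(n_f-1)\tau^{\max}_f+\omega_f(|R_f|)+\tau_{f,e_{|R_f|}}$. Relative to the generation time $t^{\mathrm{gen}}_f+qT_f$ of that instance, \eqref{eq:admission_tnp} bounds this by $d_f$, and earlier frames arrive sooner. Periodicity over $T_{\mathrm{cyc}}$, the hyperperiod, extends the argument to all cycles.

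The main obstacle is the BE case, specifically the interaction at cycle wrap-around and at cluster boundaries. Two points need care. First, a TC cluster that begins at time $0$, or with $t_{\mathrm{cur}}=t_{start}$, gets no guard band in the loop. For $t=0$, I would use the guard band appended at the end of the previous cycle, which Algorithm~\ref{alg:tas_agent} always closes with a guard segment. Second, the merging rule must keep a whole burst, including the switch-residence gaps between hops, inside one TC segment. I would first try a direct proof that any two reservations of the same instance of $f$ on consecutive links are separated by at most $\delta_s<g_{\mathrm{be}}+\Delta_{\min}$ and therefore merge into one cluster. I would also rely on Proposition~\ref{thm:drain} as stated for the drained-network claim.
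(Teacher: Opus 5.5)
Your proposal is correct and takes the same route as the paper's proof: Proposition~\ref{thm:drain} guarantees that no BE frame is in flight during a TC phase, and together with the collision-free reservation timelines this means every admitted TC frame crosses its route undisturbed, so the admission condition \eqref{eq:admission_tnp} yields the deadline. Your version just spells out what the paper leaves implicit: the time-ordered induction over hops, the three possible interference sources, the merging argument based on $\Delta_{\min}\ge\max_s\delta_s$, and the guard-band edge case at the start of the cycle.
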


\begin{proof}
Proposition~\ref{thm:drain} implies that at the start of every TC phase, no BE frame is in flight anywhere in the network. Combined with the fact that the TC Scheduling Agent produces a collision-free TC schedule, this ensures that every admitted TC frame traverses its route without being delayed by any other traffic. Consequently, every admitted TC flow meets its end-to-end deadline.
\end{proof}

\subsubsection{Running Example}
\label{sec:scenario_a_example}

To make the operation of the three agents concrete, we walk through a small example. Consider the topology shown in Figure~\ref{fig:ilustrative_topology}: four endpoints $v_1, v_2, v_3, v_4$ connected through two switches $s_1$ and $s_2$, with the Central Network Controller attached to $s_2$ as a regular endpoint. The shared link $e_3$ between the two switches is the natural bottleneck of the topology, since every flow crossing from the left side to the right side must traverse it.

Assume all links have capacity $C_e = 1$~Gbps. The switch processing delay is $\delta_s = 1$~µs, the TAS cycle is $T_{\mathrm{cyc}} = 1000$~µs, and every frame has size $L_f = 1500$~B, giving a transmission time of $\tau_{f,e} = 12$~µs on every link. We consider six flows, all transmitting a single frame with transmission period $T_{\mathrm{cyc}}$. Hence, for every flow, $n_f = 1$ and $\tau^{\max}_f = 12$~µs, so each aggregate reservation reduces to a single frame of duration $\Lambda_f = 12$~µs. Four TC flows create path diversity through the bottleneck. Two BE flows share the same fabric. Table~\ref{tab:example_flows} summarizes the flow set.

\begin{table}[h]
\centering
\caption{Flows used in the running example}
\label{tab:example_flows}
\begin{tabular}{ccccc}
\hline
\textbf{Flow} & \textbf{Class} & \textbf{Source $\to$ Destination} & \textbf{Route} & $d_f$ \\
\hline
$f_1$ & TC & $v_1 \to v_3$ & $(e_1, e_3, e_4)$ & 100~µs \\
$f_2$ & TC & $v_2 \to v_4$ & $(e_2, e_3, e_5)$ & 150~µs \\
$f_3$ & TC & $v_1 \to v_4$ & $(e_1, e_3, e_5)$ & 200~µs \\
$f_4$ & TC & $v_2 \to v_3$ & $(e_2, e_3, e_4)$ & 250~µs \\
$f_5$ & BE & $v_1 \to v_3$ & $(e_1, e_3, e_4)$ & --- \\
$f_6$ & BE & $v_2 \to v_4$ & $(e_2, e_3, e_5)$ & --- \\
\hline
\end{tabular}
\end{table}

\paragraph{TC Scheduling Agent}
The four TC flows are processed in EDF order based on the deadlines in Table~\ref{tab:example_flows}, yielding the order $f_1, f_2, f_3, f_4$. Flow $f_1$ is placed first at $t^{\mathrm{rel}}_{f_1} = 0$, occupying link $e_3$ during $[13, 25)$~µs. Flow $f_2$ would collide with $f_1$ on $e_3$ if released at time $0$, so the forbidden-interval scan shifts its release time to $t^{\mathrm{rel}}_{f_2} = 12$~µs, placing its $e_3$ reservation at $[25, 37)$~µs immediately after $f_1$. Flows $f_3$ and $f_4$ have a generation time of $t^{\mathrm{gen}}_{f_3} = t^{\mathrm{gen}}_{f_4} = 500$~µs, forming a second cluster later in the cycle. The TC Scheduling Agent places $f_3$ at $t^{\mathrm{rel}}_{f_3} = 500$~µs and shifts $f_4$ to $t^{\mathrm{rel}}_{f_4} = 512$~µs to avoid a collision on $e_3$. The resulting per-link occupancy is shown in Figure~\ref{fig:example_per_link}. The bottleneck link $e_3$ carries all four TC flows packed back-to-back within each cluster, while other links carry only the subset of flows that traverse them. Since the schedule is contention-free, every TC frame traverses its route in exactly the sum of its transmission and processing delays: the end-to-end latency from release is $\omega_f(3) + \tau_{f,e_3} = 2 \cdot (12 + 1) + 12 = 38$~µs for every TC flow in this topology.

\begin{figure}[h]
\centering
\includegraphics[width=\columnwidth]{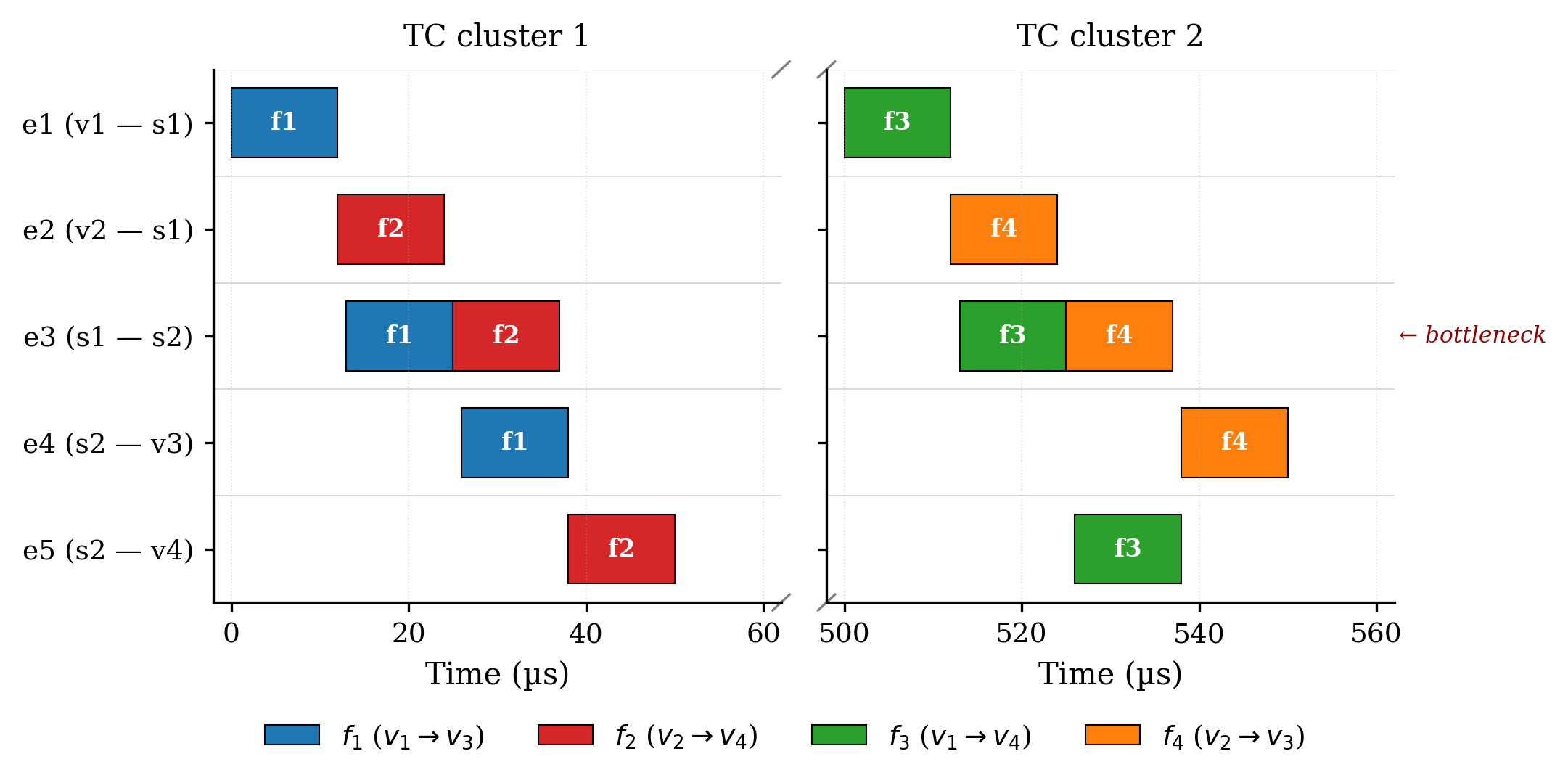}
\caption{Per-link occupancy after the TC Scheduling Agent places the
four TC flows. The broken time axis displays the two TC clusters at a
readable scale.}
\label{fig:example_per_link}
\end{figure}

\paragraph{BE Shaping Agent}
The BE flows $f_5$ and $f_6$ both traverse the bottleneck link $e_3$, so $K_{\mathrm{BE}}(e_3) = 2$, and they carry equal traffic, giving $N_{\mathrm{BE}}(v_1) = N_{\mathrm{BE}}(v_2) = N_{\mathrm{BE}}(e_3)/2$. For all other links, $K_{\mathrm{BE}}(e) = 1$. The bottleneck of both $v_1$ and $v_2$ is therefore $e_3$. The BE Shaping Agent assigns both source endpoints $v_i$, for $i\in\{1,2\}$, an idle slope of
\begin{equation}
idleSlope_{v_i, \mathrm{BE}} = \frac{N_{\mathrm{BE}}(v_i)\,C_{e_3}}{N_{\mathrm{BE}}(e_3)} = \frac{C_{e_3}}{2} = 500~\text{Mbps}.
\end{equation}

The two BE flows reach $e_3$ through the separate input links $e_1$ and $e_2$ of switch $s_1$, so $d_{\mathrm{in}}(e_3) = 2$ and the BE burst bound is $\beta_{\mathrm{BE}}(e_3) = \min(K_{\mathrm{BE}}(e_3), d_{\mathrm{in}}(e_3)) = 2$. The queue depth at any switch output port on the path of a BE flow is therefore bounded by $\beta_{\mathrm{BE}}(e) - 1$ frames.

\paragraph{Network Partitioning Agent}
The Network Partitioning Agent first computes the guard band. The worst-case BE route traverses three links and two switches, with a queue drain contribution only at the switch feeding the bottleneck link $e_3$. Plugging the values into Equation~\eqref{eq:guard_be} gives
\begin{equation}
g_{\mathrm{be}} = 3 \cdot 12 + 2 \cdot 1 + (2-1) \cdot 12 = 50~\text{µs}.
\end{equation}

Next, the agent clusters the TC reservations. The first cluster ends when the last frame of $f_2$ leaves the network: released at $t^{\mathrm{rel}}_{f_2} = 12$~µs, the frame reaches its final link $e_5$ after a cumulative offset of $\omega_{f_2}(3) = 2 \cdot 12 + 2 \cdot 1 = 26$~µs and completes its transmission there at $12 + 26 + 12 = 50$~µs. The gap between the end of the first cluster (at $50$~µs) and the start of the second (at $500$~µs) is $450$~µs, well above the merge threshold $g_{\mathrm{be}} + \Delta_{\mathrm{min}} = 100$~µs, so the two clusters remain separate TC phases. The resulting global GCL is shown in Figure~\ref{fig:example_gcl_cycle}: a TC phase at the start of the cycle carrying $f_1$ and $f_2$, a 400~µs BE phase, a 50~µs guard band, a second TC phase carrying $f_3$ and $f_4$, a second 400~µs BE phase, and a final 50~µs guard band.

\begin{figure}[h]
\centering
\includegraphics[width=\columnwidth]{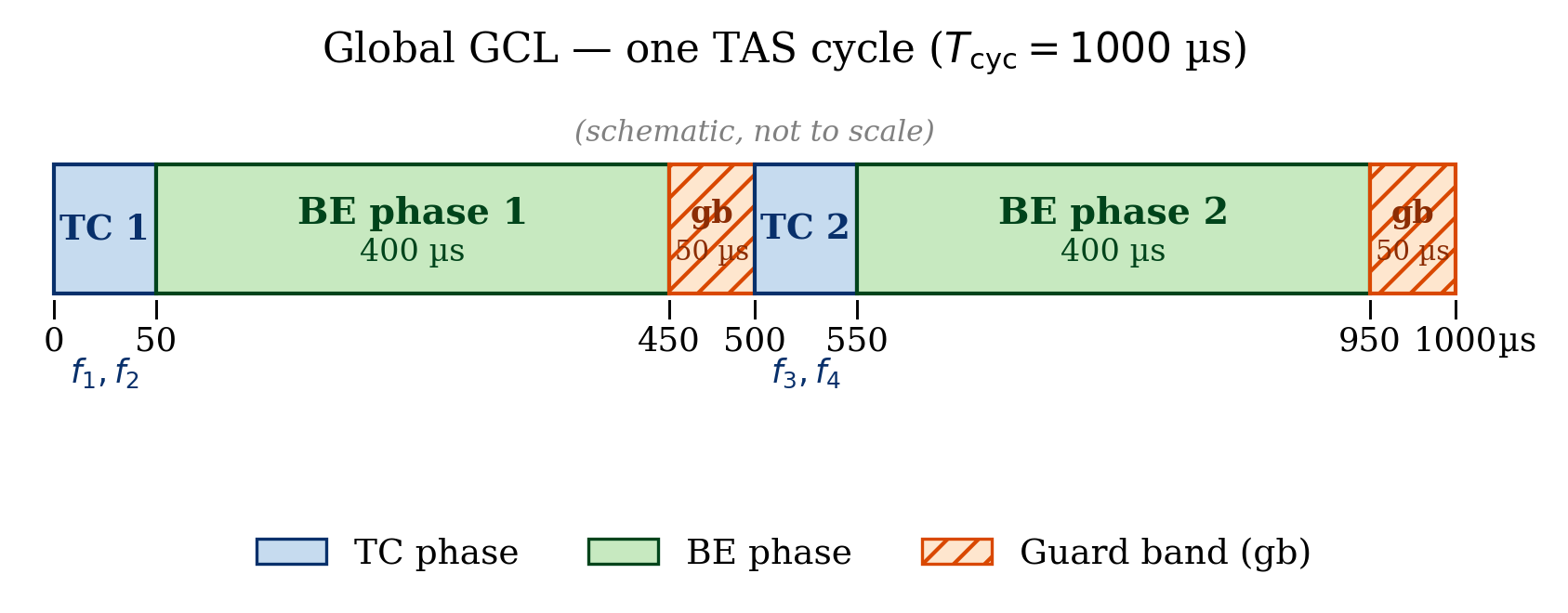}
\caption{The global GCL for the running example.}
\label{fig:example_gcl_cycle}
\end{figure}

The same GCL is pushed to every endpoint. Combined with the release times from the TC Scheduling Agent and the idle slopes from the BE Shaping Agent, this completes the configuration according to the TNP method for the example. Every TC frame is transmitted at its scheduled instant and traverses its route without interference, and each BE endpoint receives a guaranteed share of the bottleneck link during each BE phase.

\subsection{Traffic Prioritization}
\label{sec:scenario_b}

For the TP method, we assume that the network consists of commodity switches with strict-priority queuing at each output port. At any switch, whenever a TC frame is ready for transmission, it is served before any pending BE frame. This property replaces the need for temporal separation between TC and BE traffic, because TC-versus-BE contention is resolved locally at every switch rather than globally by a source GCL.

As a consequence, the central controller employs only two agents: the TC Scheduling Agent and the BE Shaping Agent. The Network Partitioning Agent is no longer needed, and the resulting configuration contains no global GCL, no TC phases, and no guard band. BE traffic is therefore \textit{work-conserving}: a BE endpoint may transmit whenever its CBS credit is non-negative and its egress link is idle, without waiting for a scheduled BE phase. This improves overall link utilization compared to the TNP method at the cost of requiring switches to support strict-priority queuing.

The absence of a global GCL does not eliminate the need for release-time coordination. Strict priority resolves contention between TC and BE traffic, but it does not resolve contention between two TC frames arriving simultaneously at the same switch output port. The TC Scheduling Agent therefore continues to produce a collision-free TC schedule, using the same forbidden-interval scan introduced in Section~\ref{sec:release_time_agent}. The only difference is that the per-hop transmission time used by the scan is inflated to account for the worst-case blocking incurred by a TC frame arriving at a switch where a BE frame has just begun transmission on the output link. Because frames are non-preemptive, a newly arriving TC frame may wait for up to one maximum-size BE frame to complete before the switch serves the TC frame. This blocking term is bounded, finite, and independent of the number of BE flows, which makes the worst-case TC latency a closed-form quantity rather than one that depends on queue dynamics.

\subsubsection{TC Scheduling Agent}
\label{sec:release_time_agent_b}

The TC Scheduling Agent in the TP method follows the same forbidden-interval scan structure as in TNP method (Algorithm~\ref{alg:rt_agent}), but with two modifications that reflect the strict-priority behavior of the switches.

First, the per-link time consumed by a TC frame along its route must account for the worst-case BE blocking incurred at every switch. When a TC frame arrives at a switch $s$ whose output link $e_{\mathrm{out}}^{s,f}$ is currently transmitting a BE frame, the TC frame must wait for that BE frame to complete. Because the transmission is non-preemptive, this wait is bounded by the transmission time of a maximum-size BE frame on that link:
\begin{equation}
\tau^{\mathrm{BE}}_e = \frac{L^{\mathrm{max}}_{\mathrm{BE}}}{C_e},
\end{equation}

where $L^{\mathrm{max}}_{\mathrm{BE}}$ is the maximum BE frame size. The cumulative offset from the source to link $e_k$ along the route $R_f$ is therefore updated to include this blocking term at every intermediate switch:
\begin{equation}
\omega^{B}_f(k) = \sum_{i=1}^{k-1} \left( \tau_{f,e_i} + \delta_{s_i} + \tau^{\mathrm{BE}}_{e_{i+1}} \right).
\end{equation}

This modified offset replaces $\omega_f(k)$ of Equation \eqref{eq:offset} in the forbidden-interval scan (line~6 of Algorithm \ref{alg:rt_agent}). As in the TNP method, the $n_f$ frames of one period are released with spacing $\tau^{\max}_f$ and represented as one aggregate reservation of duration $\Lambda_f$, and each flow is placed with all its $T_{\mathrm{cyc}}/T_f$ instances per cycle: the aggregate reservation of the $q$-th instance on link $e_k$ becomes $[t^{\mathrm{rel}}_f + q \cdot T_f + \omega^{B}_f(k),\; t^{\mathrm{rel}}_f + q \cdot T_f + \omega^{B}_f(k) + \Lambda_f)$, and the forbidden ranges, shifted by $-q \cdot T_f$ per instance, are computed accordingly.

Second, the admission condition is identical to that of the TNP method: Equation~\eqref{eq:admission_tnp} applies unchanged, with $\omega^{B}_f$ taking the place of $\omega_f$, and is checked when the release time is selected (line~7 of Algorithm~\ref{alg:rt_agent}). The condition remains valid for bursts under strict priority: every frame of the burst may be blocked by at most one maximum-size BE frame at each switch on its route, a delay captured by $\omega^{B}_f$, and even when blocking delays an earlier frame and a later frame temporarily queues behind it, the spacing $\tau^{\max}_f$ guarantees that each frame $i$ completes its transmission on every link $e_k$ no later than $t^{\mathrm{rel}}_{f,i} + \omega^{B}_f(k) + \tau_{f,e_k}$. 

The admission condition of TP is stricter than that of TNP. In TNP, the end-to-end latency of an admitted flow consists only of its transmission and switch processing delays, since the schedule guarantees a contention-free traversal. In TP, the worst-case latency additionally includes the BE blocking term $\tau^{\mathrm{BE}}_{e_{\mathrm{out}}^{s,f}}$ at every switch on the route. TNP can therefore admit TC flows with tighter deadlines than TP, at the cost of requiring the temporal partitioning of the network. 

All other aspects of the algorithm remain identical to Algorithm~\ref{alg:rt_agent}: the flows are processed in EDF order, the per-link reservation timelines $T_e$ are updated after each placement, and the output is a collision-free set of release times for all admitted TC flows.

After the agent completes, the per-link timelines jointly define a TC schedule in which no two TC frames occupy the same link at the same time, even under the worst-case BE blocking at every switch. Because strict priority at switches ensures that TC is never delayed by any factor beyond this bounded blocking, every admitted TC frame is delivered within its deadline.

\subsubsection{BE Shaping Agent}
\label{sec:cbs_agent_b}

The BE Shaping Agent in the TP method computes the BE idle slope $idleSlope_{v, \mathrm{BE}}$ for every endpoint $v$. Unlike in the TNP method, this configuration no longer needs to bound the BE queue depth at any switch, because strict-priority queuing ensures that TC traffic is never displaced by BE traffic at any output port. The idle slope in the TP method therefore serves a simpler purpose: it provides a fair share of the available BE capacity to each endpoint and prevents any single source from monopolizing the network.

Because no queue-bound constraint must be satisfied, the BE Shaping Agent has greater freedom in how it distributes BE capacity among endpoints. Any assignment that respects link capacities is admissible. For consistency with the TNP method and to preserve the same notion of per-endpoint fairness, we adopt the bottleneck-aware formula already introduced in Equation~\eqref{eq:idle_slope}.

In practice, any higher idle slope values may also be used without compromising TC determinism, since strict priority at switches prevents BE traffic from interfering with TC delivery regardless of the idle slope choice. This flexibility is a practical advantage of the TP method compared to TNP, particularly when BE throughput is more important than strict fairness among endpoints. The only constraints on the idle slopes are that they remain non-negative and do not exceed the link capacity.

\subsubsection{Running Example}
\label{sec:scenario_b_example}

To illustrate how the TP method operates, we revisit the example from Section~\ref{sec:scenario_a_example}. The topology, link capacities, switch processing delays, cycle duration, and the six flows remain identical to those in Figure~\ref{fig:ilustrative_topology} and Table~\ref{tab:example_flows}. The only change is that the switches now have strict-priority queuing.

\paragraph{TC Scheduling Agent}
The forbidden-interval scan proceeds in EDF order as in the TNP method, but the per-hop offset used by the scan is now $\omega^{B}_f(k)$, which includes the worst-case BE blocking term $\tau^{\mathrm{BE}}_e$ at every switch. With a maximum BE frame size of $L^{\mathrm{max}}_{\mathrm{BE}} = 1500$~B and link capacity of $1$~Gbps, we have $\tau^{\mathrm{BE}}_e = 12$~µs on every link.

For flow $f_1$ released at $t^{\mathrm{rel}}_{f_1} = 0$, the reservation on $e_1$ covers $[0, 12)$~µs, but the reservation on $e_3$ is shifted to $[25, 37)$~µs because of the potential blocking at switch $s_1$. Similarly, the reservation on $e_4$ falls at $[50, 62)$~µs. Flow $f_2$ is then placed with $t^{\mathrm{rel}}_{f_2} = 12$~µs, packing its $e_3$ reservation back-to-back with $f_1$ at $[37, 49)$~µs. The second cluster is built analogously: $f_3$ is released at $500$~µs and $f_4$ at $512$~µs. The resulting per-link occupancy is shown in Figure~\ref{fig:example_per_link_b}.

\begin{figure}[h]
\centering
\includegraphics[width=\columnwidth]{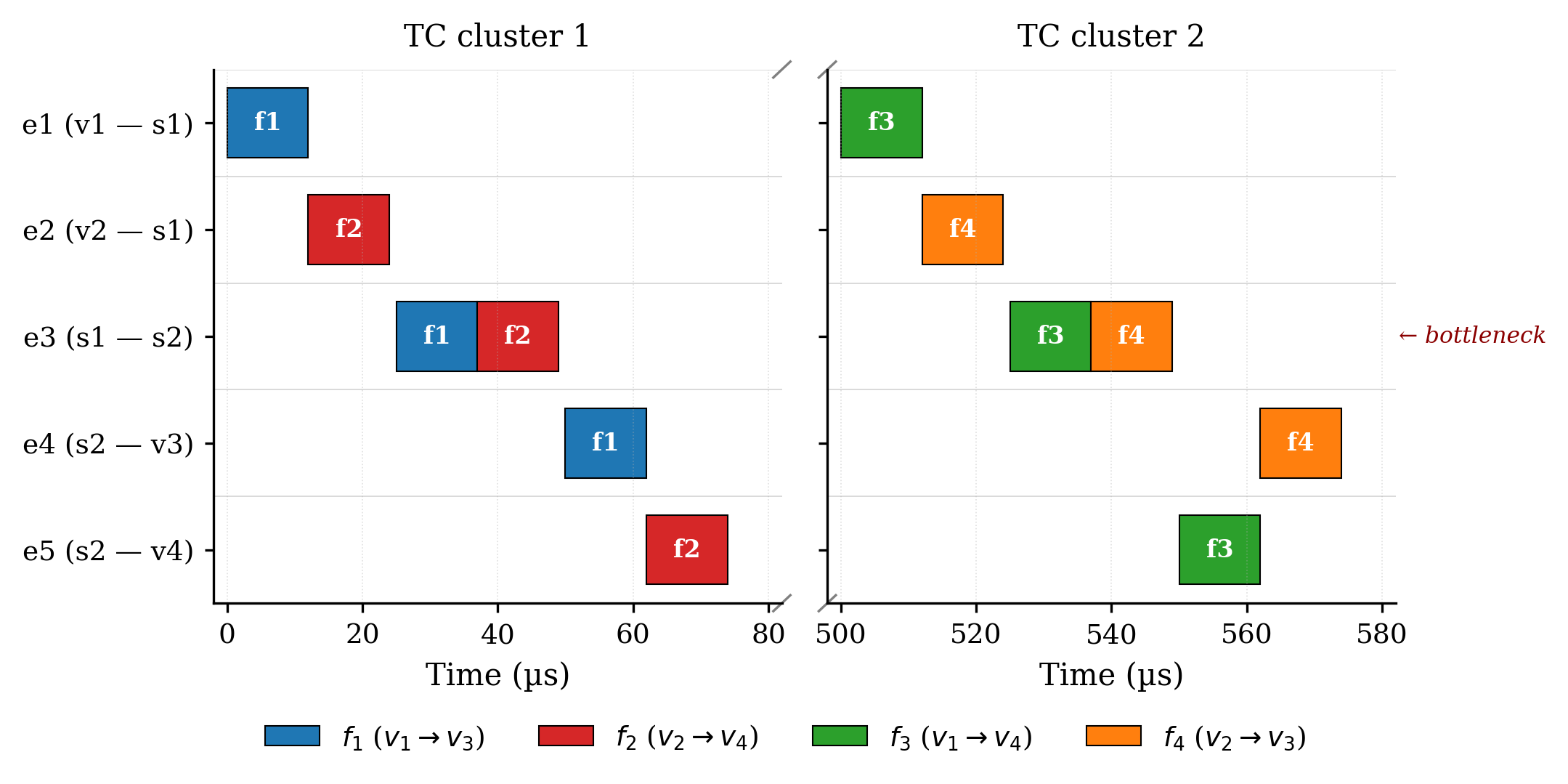}
\caption{Per-link occupancy under TP method.}
\label{fig:example_per_link_b}
\end{figure}

Comparing this result to the result of the TNP method, the TC reservations on the bottleneck link $e_3$ remain back-to-back within each cluster, but each cluster is now 74~µs wide rather than 50~µs. This 48\% increase reflects the worst-case BE blocking term accounted for by the TC Scheduling Agent. The worst-case end-to-end latency for any TC flow in this topology is $D^{\max}_f = 3 \cdot 12 + 2 \cdot (1 + 12) = 62$~µs, which is well within the deadlines of the admitted flows. Compared to the $38$~µs latency of the TNP method, the $24$~µs increase corresponds exactly to the worst-case BE blocking of $12$~µs at each of the two switches, illustrating the latency cost of the work-conserving BE traffic.

\paragraph{BE Shaping Agent}
The set of BE flows and their routes are unchanged, so the bottleneck analysis remains the same as in the TNP method. The BE Shaping Agent therefore assigns $idleSlope_{v_1, \mathrm{BE}} = idleSlope_{v_2, \mathrm{BE}} = 500$~Mbps, matching the TNP method.

\section{Evaluation and Results}\label{sec:eval}
In this section, we evaluate the TNP and TP methods proposed in Section~\ref{sec:SbDN}. Since we introduce a source-based architecture that replaces per-switch scheduling with centralized agents and endpoint enforcement, the primary objective of our evaluation is to demonstrate that both methods effectively protect TC streams and guarantee their deadline requirements. We compare our methods against two baselines: HERMES~\cite{Bujosa2022}, a mixed-criticality TSN scheduler that configures TAS and CBS at every switch port, and LCDN~\cite{Diederich2025}, a recent approach that provides deterministic guarantees on commodity strict-priority switches using network calculus and token bucket shaping. Together, these baselines bracket the design space that SbDN occupies: HERMES represents the standard TSN approach and quantifies the cost of moving enforcement from switches to source endpoints, while LCDN represents the closest alternative for determinism on commodity hardware. We first analyze the computational time required by each agent to produce a valid scheduling configuration. We then compare all four approaches in terms of TC deadline satisfaction and BE throughput under increasing traffic loads.

\subsection{Experimental Setup}
\label{sec:eval_setup}

Our evaluation uses the INSIM simulation platform~\cite{Karimi2025INSIM}. Since our system model and problem definition differ from the standard per-switch TSN formulation, we extended INSIM with additional modules to support source-based TAS and CBS enforcement at endpoints, the three-agent scheduling pipeline of SbDN, and the  HERMES and LCDN baselines, providing a unified evaluation framework. We evaluate all methods on two representative topologies shown in Figure~\ref{fig:topologies}: a mesh topology and a tree topology. The mesh topology creates multiple overlapping paths and high contention on shared links, while the tree topology produces a natural bottleneck at the root. All links operate at 100~Mbps, the switch processing delay is 10~µs. Routes for all flows are given as input; routing is not part of the problem addressed in this work.

\begin{figure}[t]
\centering
\begin{subfigure}[b]{\columnwidth}
\centering
\includegraphics[width=0.75\columnwidth]{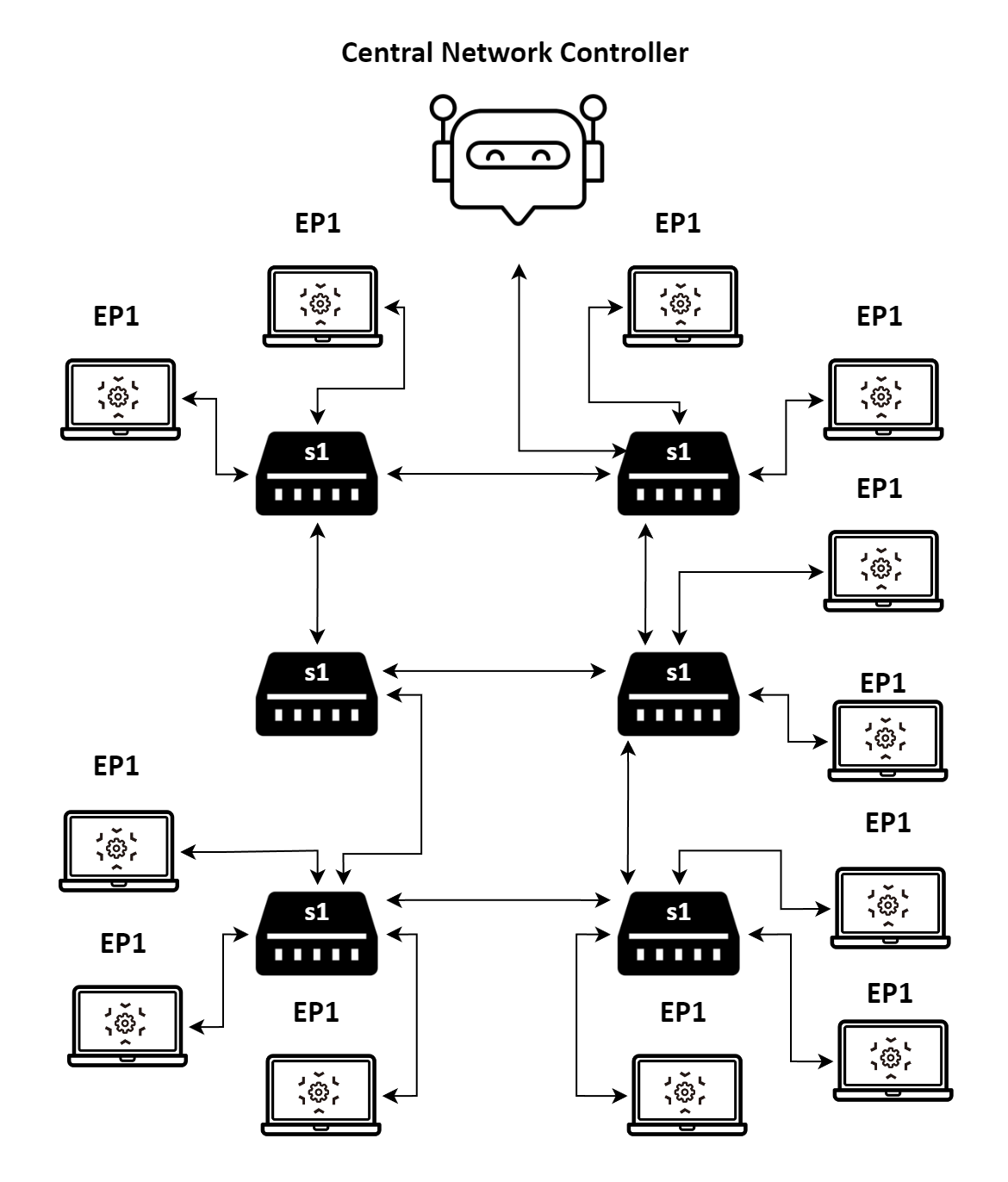}
\caption{Mesh topology.}
\label{fig:top_mesh}
\end{subfigure}

\vspace{0.3cm}

\begin{subfigure}[b]{\columnwidth}
\centering
\includegraphics[width=0.75\columnwidth]{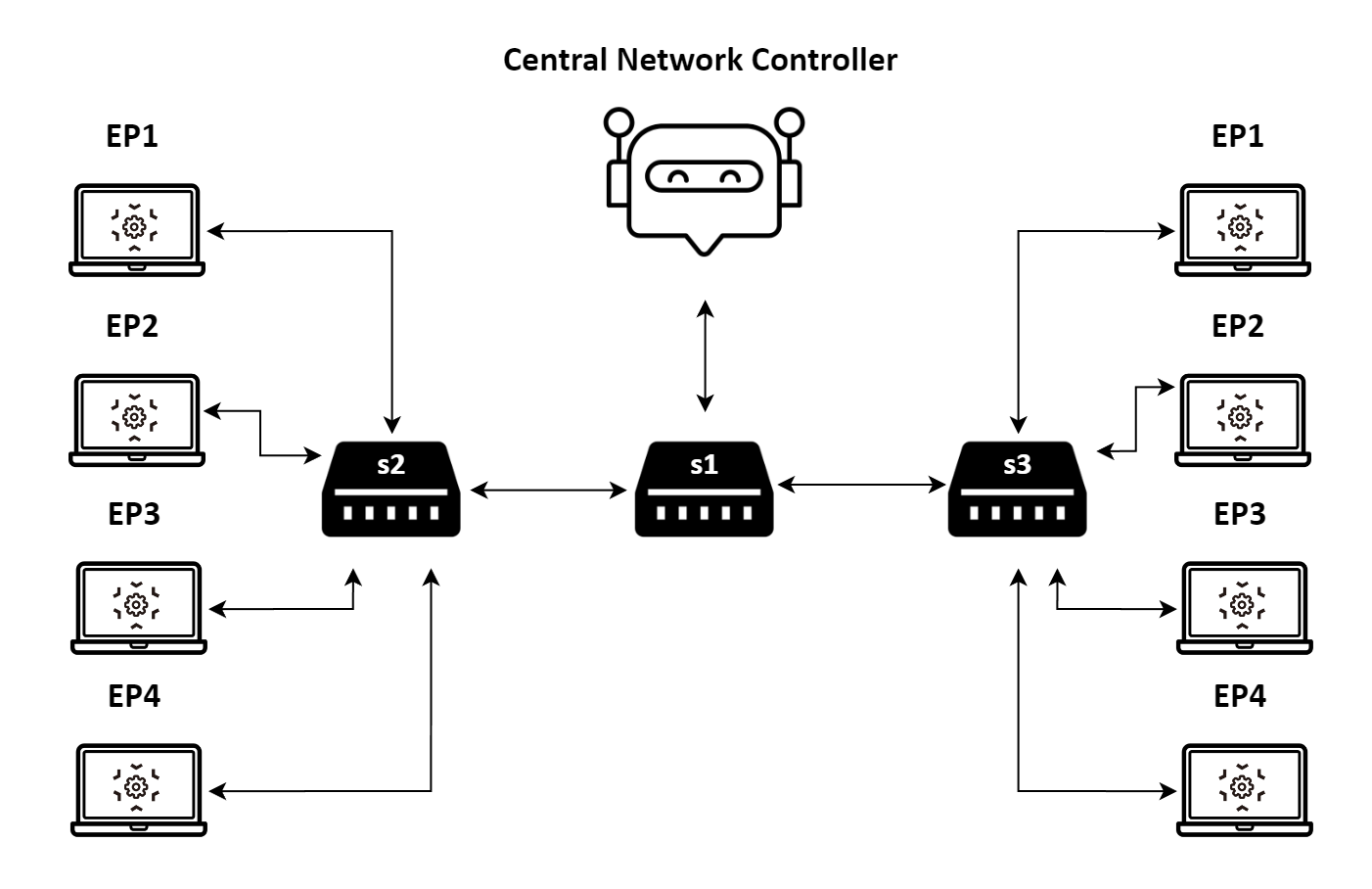}
\caption{Tree topology.}
\label{fig:top_tree}
\end{subfigure}
\caption{Topologies used in the evaluation.}
\label{fig:topologies}
\end{figure}

To systematically cover a range of operating conditions, we design a benchmark that sweeps 20 workload scenarios along two axes: network load heaviness and traffic class balance. The load heaviness axis (H1--H5) varies the total number of flows from 10 to 100, while the balance axis (B1--B4) varies the fraction of TC flows from 20\% to 80\%. Table~\ref{tab:scenarios_grid} lists the resulting TC and BE flow counts for each combination. Each of the 20 workload scenarios is evaluated on both topologies, yielding 40 network scenarios (workload, topology) per method. The complete benchmark is released alongside the SbDN implementation in the same open-source repository.

\begin{table}[t]
\centering
\caption{Scenario grid (TC / BE flow counts)}
\label{tab:scenarios_grid}
\begin{tabular}{lcccc}
\hline
 & \textbf{B1 (20\%)} & \textbf{B2 (40\%)} & \textbf{B3 (60\%)} & \textbf{B4 (80\%)} \\
\hline
\textbf{H1} (10)  & 2\,/\,8  & 4\,/\,6   & 6\,/\,4   & 8\,/\,2 \\
\textbf{H2} (20)  & 4\,/\,16 & 8\,/\,12  & 12\,/\,8  & 16\,/\,4 \\
\textbf{H3} (40)  & 8\,/\,32 & 16\,/\,24 & 24\,/\,16 & 32\,/\,8 \\
\textbf{H4} (70)  & 14\,/\,56 & 28\,/\,42 & 42\,/\,28 & 56\,/\,14 \\
\textbf{H5} (100) & 20\,/\,80 & 40\,/\,60 & 60\,/\,40 & 80\,/\,20 \\
\hline
\end{tabular}
\end{table}

\subsection{Scheduling Time Analysis}
\label{sec:eval_timing}

One motivation for SbDN is enabling runtime reconfiguration when network conditions change. For this to be practical, the scheduling algorithms must produce valid configurations fast enough to support online adaptation. We therefore begin by analyzing the computation time required by each method.

\subsubsection{Per-Agent Breakdown}

Figures~\ref{fig:tnp_breakdown} and~\ref{fig:tp_breakdown} show the total scheduling time of the TNP and TP methods, respectively, broken down by agent for each of the 20 workload scenarios on both topologies. In both methods, the TC Scheduling Agent dominates the computation time, as expected: it performs the forbidden-interval scan over all TC flows and must query and update the per-link interval trees for every candidate placement. The Network Partitioning Agent and the BE Shaping Agent contribute negligibly to the total, consistent with the complexity analysis in Section~\ref{sec:SbDN}: GCL synthesis is log-linear in the total number of TC reservations per cycle, idle slope computation is linear in the total size of the BE routes, and the forbidden-interval scan of the TC Scheduling Agent dominates.

\begin{figure}[t]
\centering
\begin{subfigure}[b]{\columnwidth}
\centering
\includegraphics[width=\columnwidth]{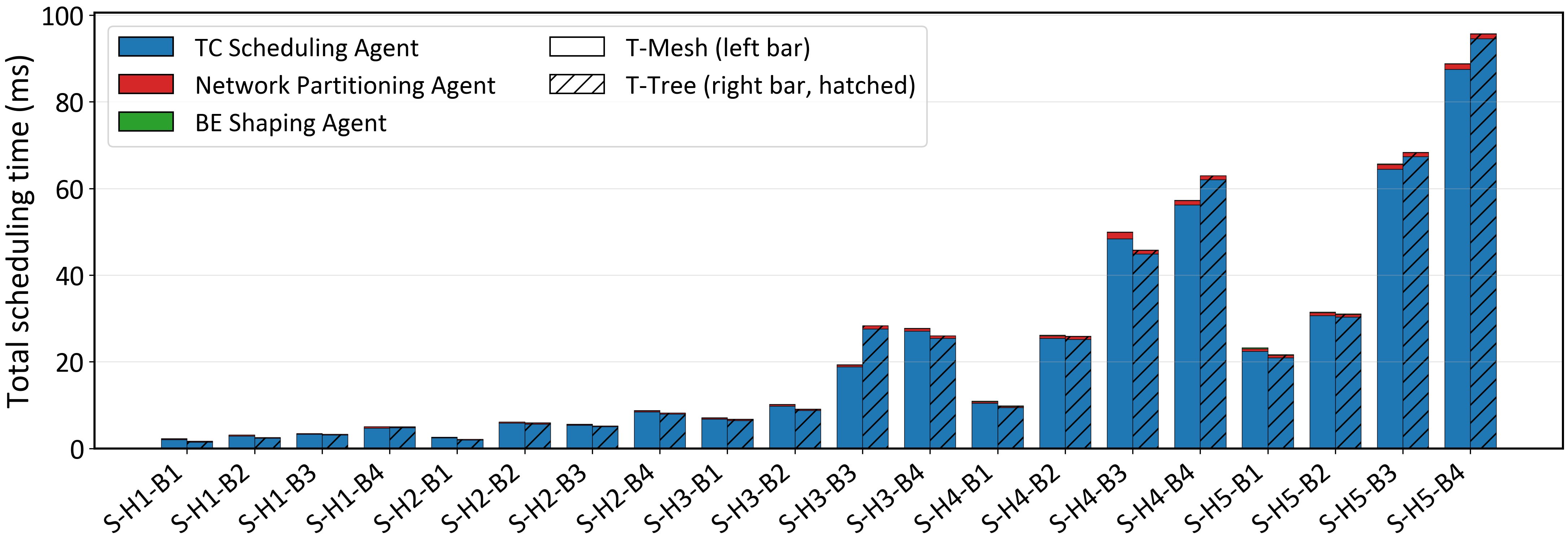}
\caption{TNP.}
\label{fig:tnp_breakdown}
\end{subfigure}

\vspace{0.2cm}

\begin{subfigure}[b]{\columnwidth}
\centering
\includegraphics[width=\columnwidth]{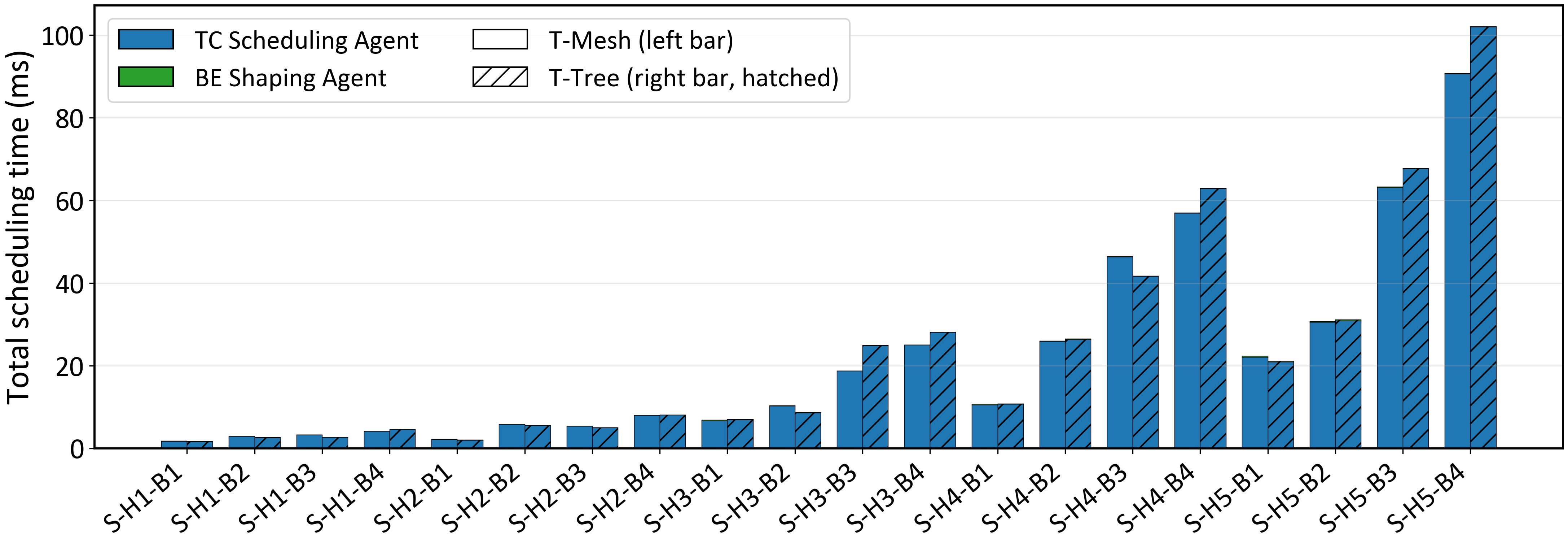}
\caption{TP.}
\label{fig:tp_breakdown}
\end{subfigure}
\caption{Per-agent scheduling time breakdown across all 20 scenarios on
both topologies.}
\label{fig:agent_breakdown}
\end{figure}

Because the TC Scheduling Agent accounts for almost all of the runtime and is essentially the same in both methods, TNP and TP exhibit nearly identical scheduling times across all scenarios; the Network Partitioning Agent used only by TNP adds negligible overhead. Both methods remain in the low millisecond range, from under 2~ms for the lightest scenarios (H1) to about 100~ms for the heaviest (H5-B4 with 80 TC flows on the tree topology). Scheduling time grows with both the total number of flows and the TC fraction, as a higher TC count increases the number of forbidden intervals that must be evaluated during each placement. Reported values are averaged over 10 runs to remove run-to-run measurement noise.

\subsubsection{Comparison Across Methods}

Figure~\ref{fig:total_box} compares the total scheduling time of TNP, TP, LCDN, and HERMES across all 40 network scenarios. The $y$-axis uses a logarithmic scale because the methods span roughly four orders of magnitude, from sub-millisecond computation to multi-second runtimes. Each dot represents one of the 20 workload scenarios for a given topology, the thick horizontal bar marks the median, and the thin vertical line spans the range from minimum to maximum. We use individual data points rather than conventional box plots because the interquartile range rectangle conveys little useful information on a logarithmic scale, where visual area does not correspond to absolute spread.

\begin{figure}[t]
\centering
\includegraphics[width=\columnwidth]{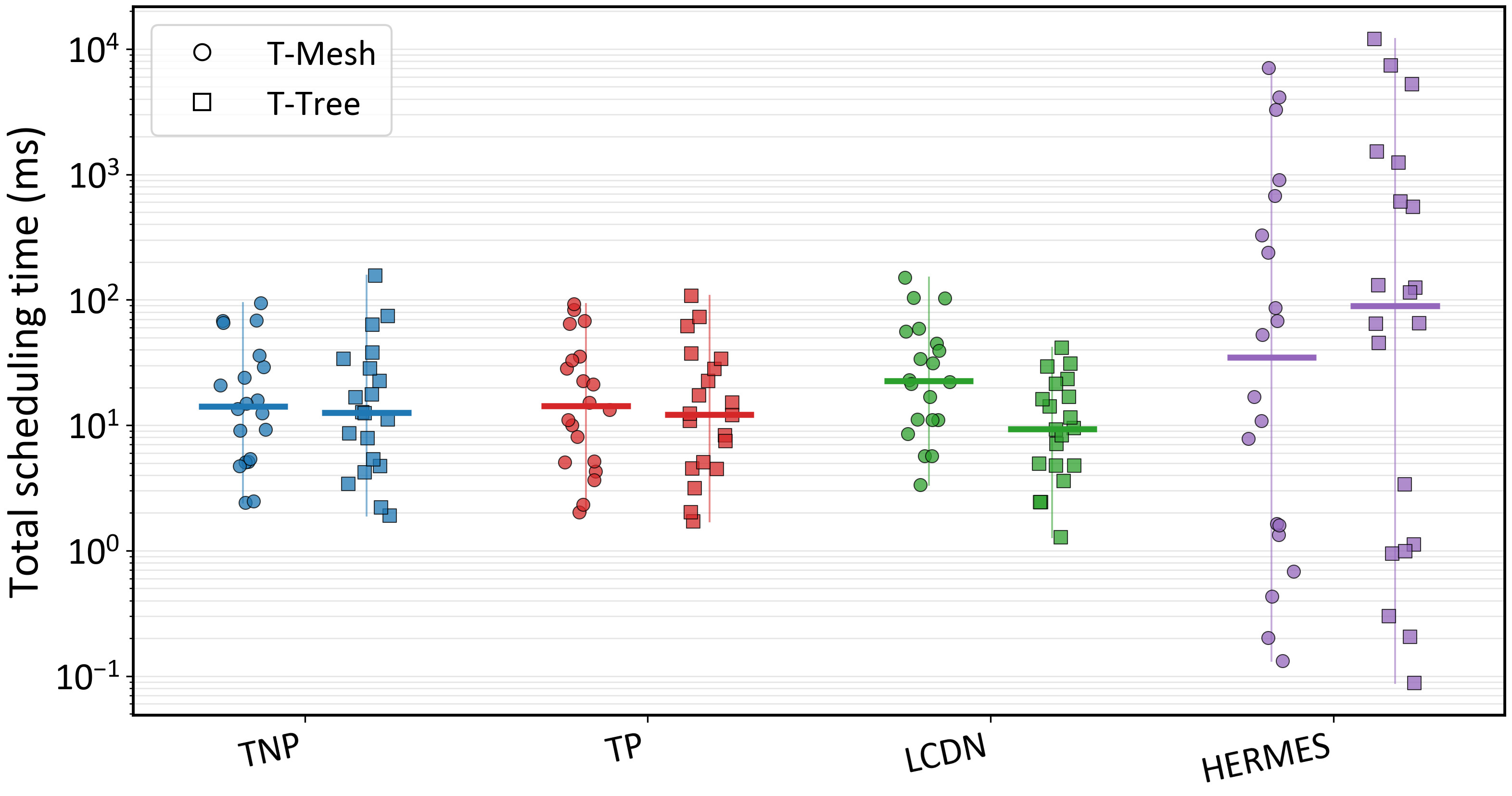}
\caption{Total scheduling time per method across all 20 scenarios on
both topologies (log scale).}
\label{fig:total_box}
\end{figure}

\begin{figure*}[t]
\centering
\begin{subfigure}[t]{0.48\textwidth}
\centering
\includegraphics[width=\linewidth]{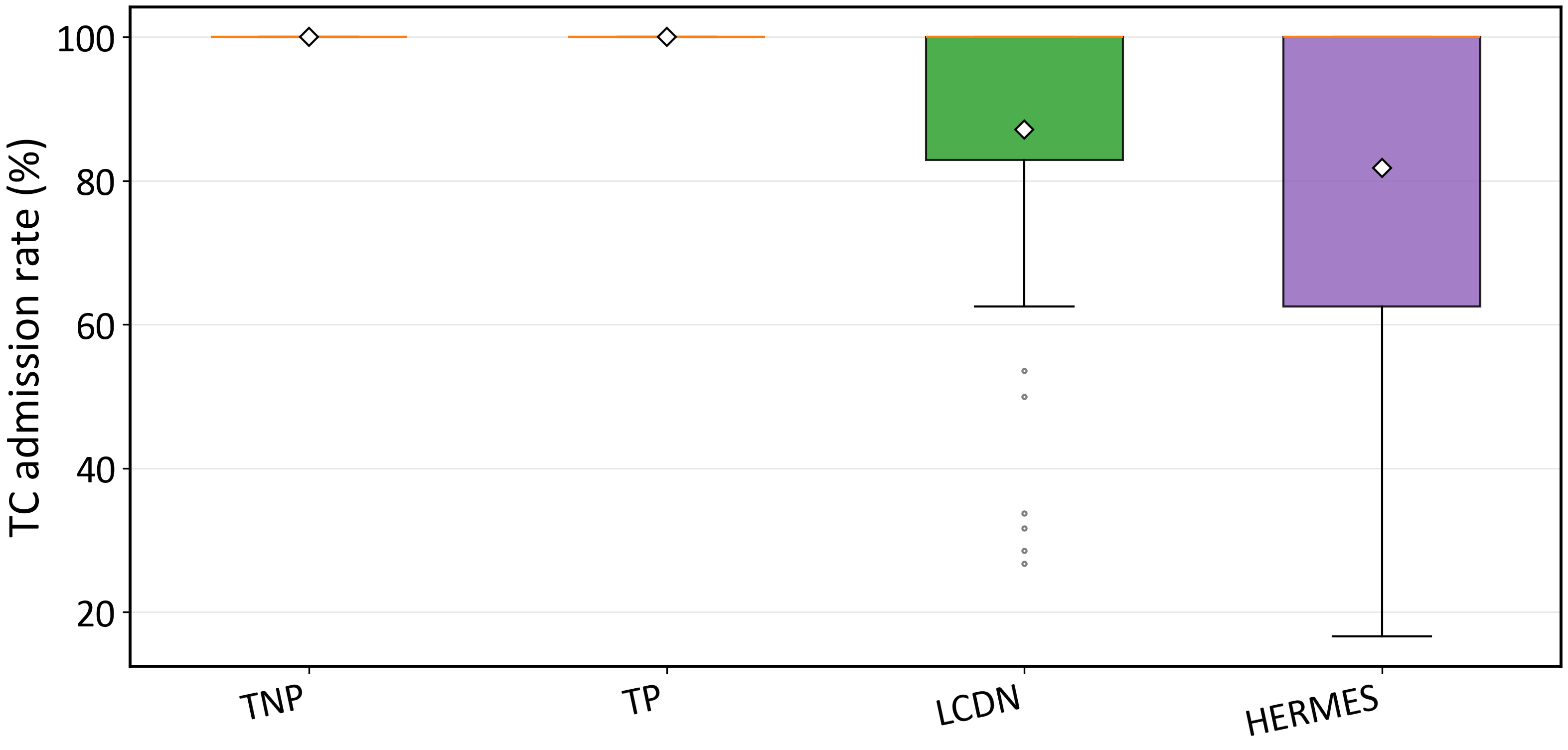}
\caption{TC admission rate across all 40 scenarios.}
\label{fig:admission}
\end{subfigure}
\hfill
\begin{subfigure}[t]{0.48\textwidth}
\centering
\includegraphics[width=\linewidth]{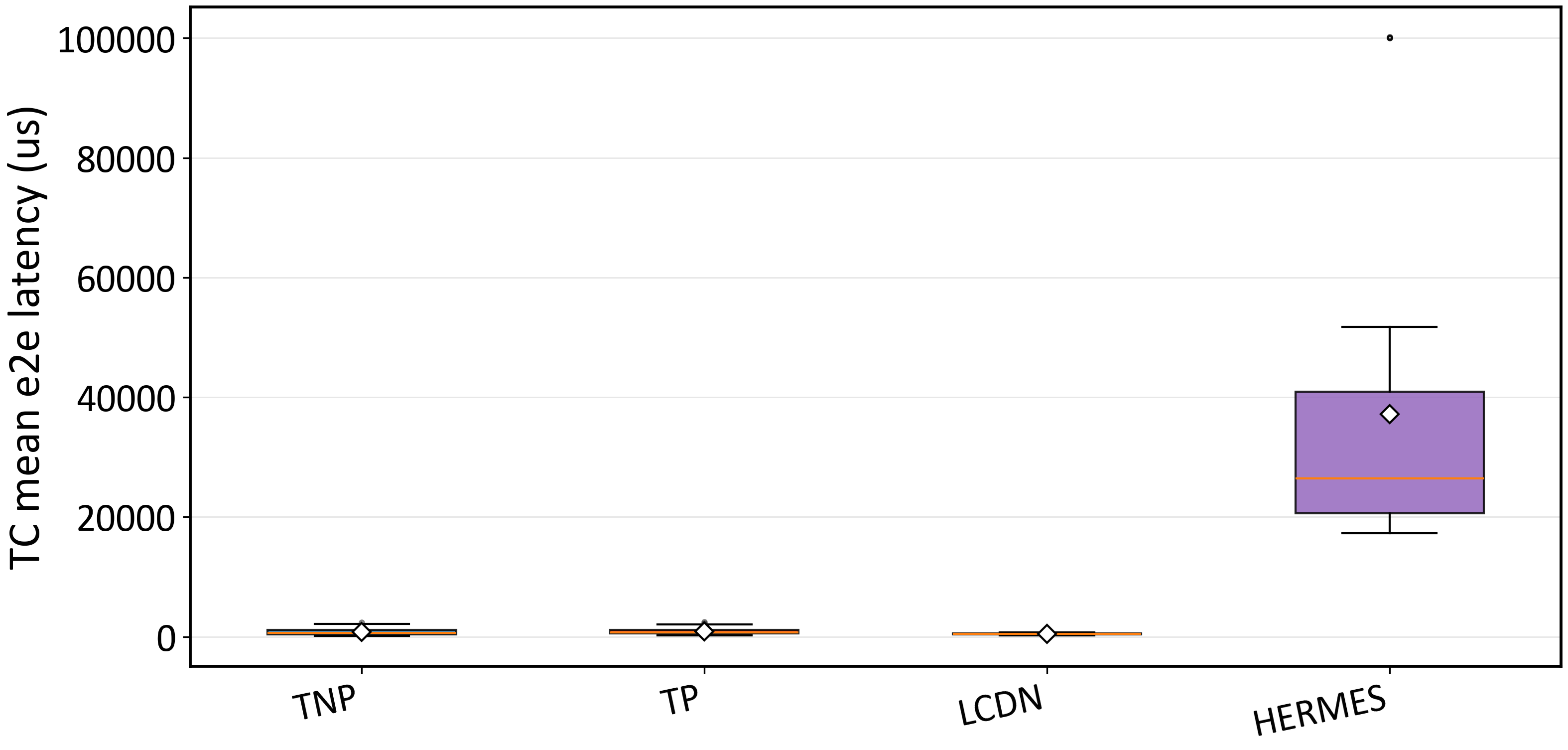}
\caption{Mean end-to-end latency of admitted TC flows across all 40 scenarios.}
\label{fig:tc_e2e}
\end{subfigure}

\vspace{0.2em}

\begin{subfigure}[t]{0.48\textwidth}
\centering
\includegraphics[width=\linewidth]{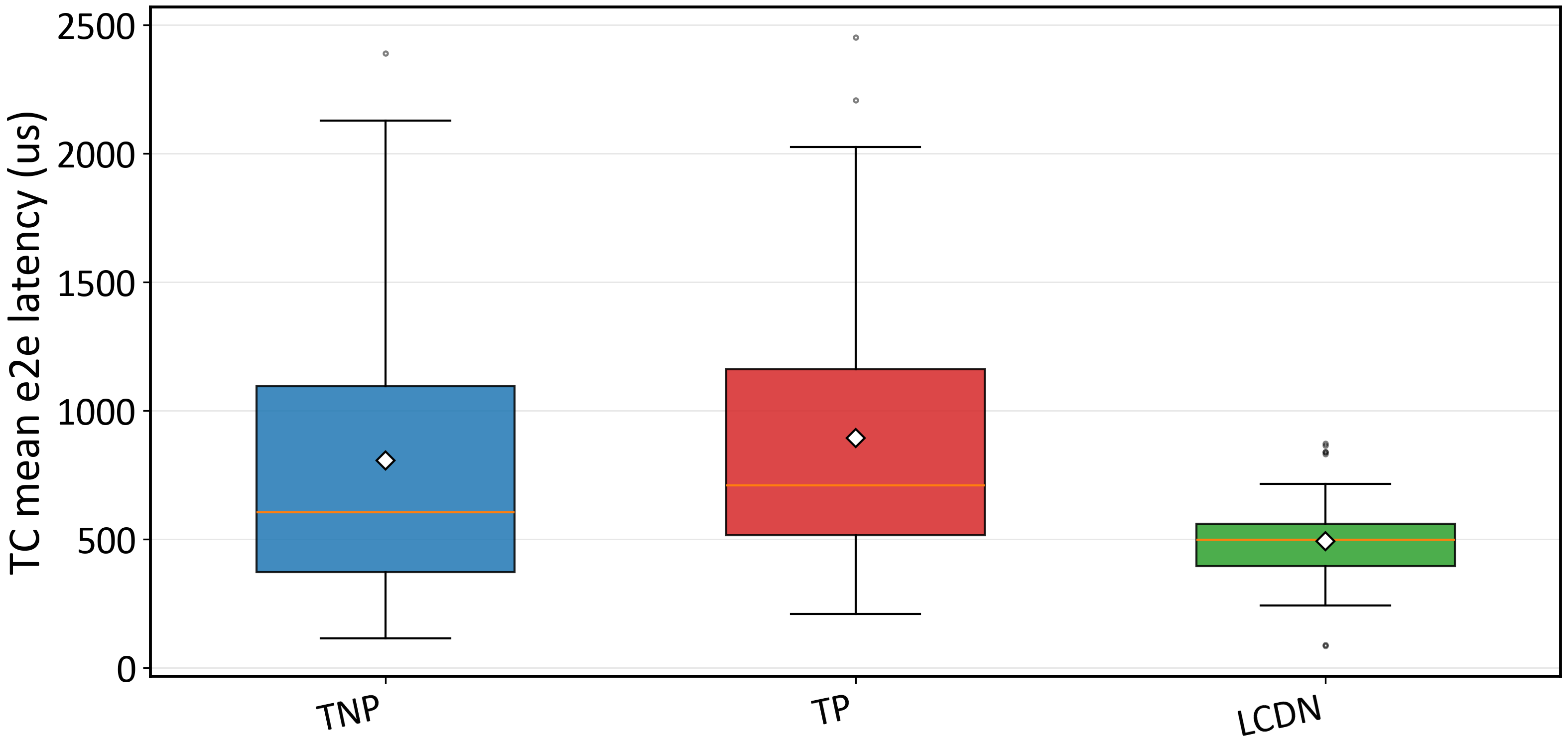}
\caption{Mean end-to-end latency of admitted TC flows (HERMES excluded for readability).}
\label{fig:tc_e2e_zoom}
\end{subfigure}
\hfill
\begin{subfigure}[t]{0.48\textwidth}
\centering
\includegraphics[width=\linewidth]{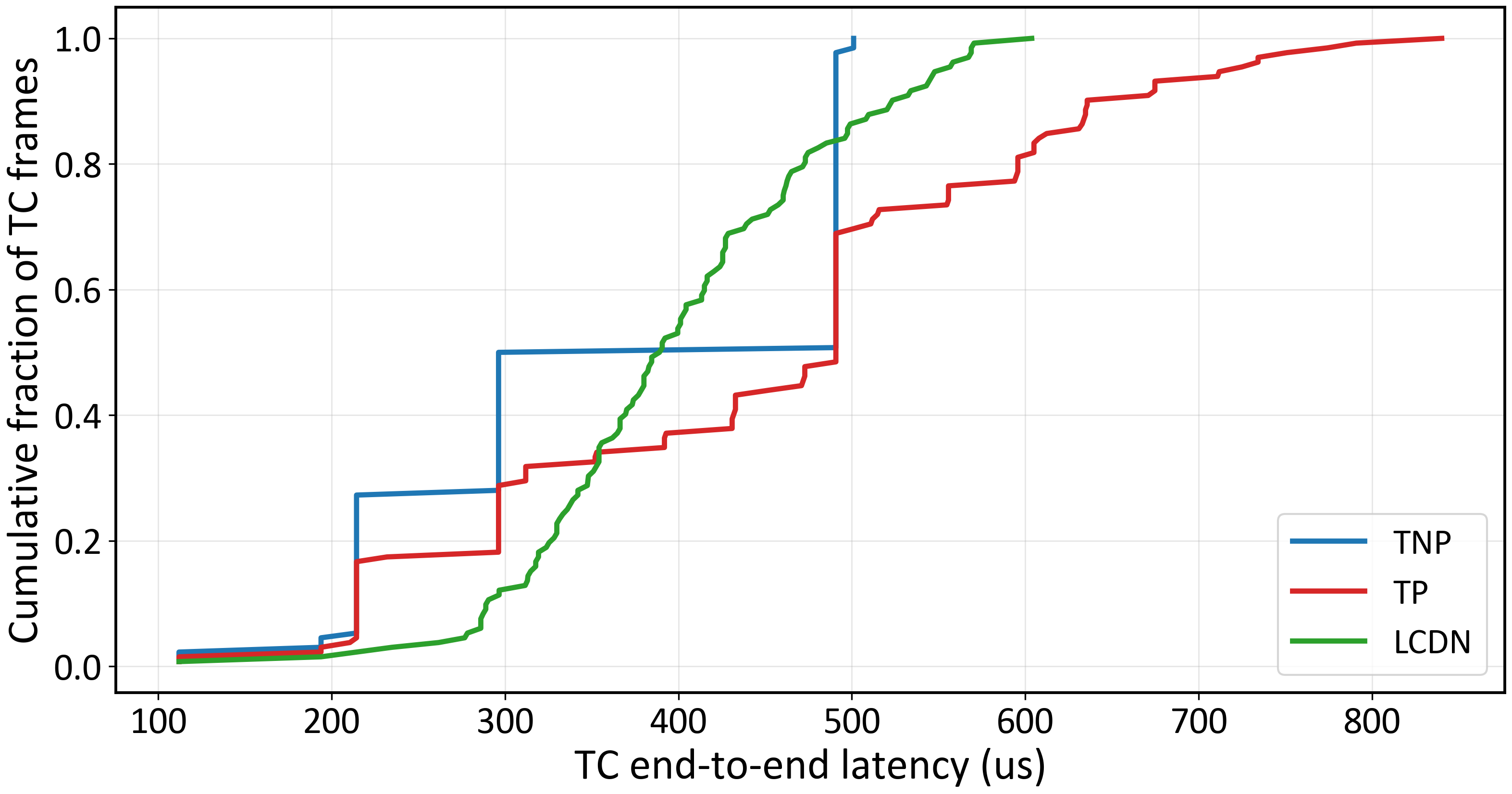}
\caption{CDF of per-frame TC end-to-end latency for scenario H2-B2 on the tree topology.}
\label{fig:tc_cdf}
\end{subfigure}

\caption{TC schedule quality across all 40 scenarios.}
\label{fig:tc_quality}
\end{figure*}

TNP and TP exhibit comparable median scheduling times of approximately 13~ms and remain consistently below 110~ms even in the most demanding scenarios. LCDN achieves similar median times, which is expected since its admission control also performs per-flow analysis with lightweight network calculus computations. HERMES, by contrast, produces median scheduling times that are roughly one order of magnitude higher, with worst-case runtimes exceeding 10~seconds. This difference stems from the algorithmic nature of HERMES, which iteratively slides and adjusts TC frame placements across multiple switch ports to find feasible GCL configurations, a process whose cost compounds with the number of switches and flows.

These results confirm that both TNP and TP are suitable for runtime reconfiguration: a new schedule can be computed in tens of milliseconds, which is well within the time budget available for online adaptation in most industrial and automotive applications.

\subsection{TC Schedule Quality}
\label{sec:eval_tc}

We evaluate the quality of the TC schedule produced by each method along two interlinked dimensions: the fraction of TC flows that can be admitted into the network, and the end-to-end latency experienced by the admitted flows. These two metrics must be read together: a method that rejects difficult flows trivially achieves lower latency on the ones it admits, so latency results are only meaningful in the context of the corresponding admission rate.

All box plots in this section follow the same convention. Each box summarizes a method's distribution of values across the 40 (workload~$\times$~topology) scenarios. The box spans the interquartile range (IQR): its bottom and top edges correspond to the 25th and 75th percentiles, so the middle 50\% of measurements lie inside it. The horizontal line inside the box marks the median, and the white diamond marks the mean. When the diamond sits noticeably above the median line, a small number of high-value scenarios are pulling the average upward. The whiskers extend to the most extreme values within 1.5~$\times$~IQR of the box edges; dots beyond the whiskers are outliers representing configurations whose value is unusually far from the bulk. A short box indicates consistent behavior across the benchmark, while a tall box indicates strong dependence on the workload.

Figure~\ref{fig:tc_quality} summarizes both metrics. Panel~(a) shows the TC admission rate, panel~(b) shows the mean end-to-end latency of admitted TC flows for all four methods, panel~(c) re-plots the latency with HERMES excluded so that the other three methods are visible at a useful scale, and panel~(d) shows the per-frame latency CDF for a representative scenario in which all three commodity-switch methods admit every TC flow.

\textbf{Admission rate.} Both TNP and TP achieve 100\% admission across all 40 scenarios (Fig.~\ref{fig:tc_quality}(a)): every TC flow is successfully placed without violating any deadline constraint. LCDN and HERMES, by contrast, exhibit lower and more variable admission rates. LCDN achieves a median of 100\% but drops below 60\% in several heavy-load scenarios, with outliers as low as 27\%. HERMES shows a wider spread, with a median of 100\% but a mean of approximately 82\% and worst-case admission below 20\%. The lower admission rates of LCDN and HERMES stem from fundamentally different causes: LCDN's network-calculus-based admission control rejects flows whose worst-case delay bounds exceed the deadline, even when the actual delay would be acceptable, while HERMES's iterative slide-and-bump placement can fail to find a feasible GCL configuration under high contention.

\textbf{End-to-end latency.} Figure~\ref{fig:tc_quality}(b) compares the mean end-to-end latency of admitted TC flows across all four methods. TNP, TP, and LCDN all achieve mean latencies in the low hundreds of microseconds, while HERMES exhibits substantially higher latencies with a median around 27~ms and individual scenarios exceeding 100~ms. This result reflects a deliberate design choice in HERMES: its scheduling algorithm prioritizes maximizing BE throughput by placing TC frames in positions that leave the largest possible gaps for BE traffic, which comes at the expense of TC latency. In contrast, the SbDN methods prioritize earliest placement of TC frames through the EDF-ordered forbidden-interval scan.

Because the scale of the HERMES latencies compresses the other three methods into a narrow band near zero, Fig.~\ref{fig:tc_quality}(c) shows the same comparison with HERMES excluded. TNP achieves a median mean latency of approximately 600~µs, while TP is slightly higher at approximately 720~µs. This difference is expected: the TP method inflates the per-hop offset at every switch by the worst-case BE blocking term $\tau^{\mathrm{BE}}_e$, which shifts each TC frame later on the timeline even though the actual blocking may not occur.

LCDN appears at first glance to achieve the lowest mean latency, with a median around 480~µs and a tighter distribution. This result, however, must be read together with Fig.~\ref{fig:tc_quality}(a): LCDN admits fewer TC flows in many scenarios, and the flows it rejects tend to be precisely those with longer routes and tighter deadlines that would have contributed higher latencies to the distribution. The lower mean latency of LCDN is therefore partly an artifact of selection bias rather than a reflection of superior scheduling.

Figure~\ref{fig:tc_quality}(d) illustrates this effect directly. It shows the Cumulative Distribution Function (CDF) of per-frame TC end-to-end latency for a representative scenario (H2-B2, tree topology) in which all three commodity-switch methods admit every TC flow. The CDF plots, for each latency value on the $x$-axis, the fraction of TC frames that experienced a latency at or below that value; a curve that rises steeply and reaches 1.0 early indicates that most frames are delivered quickly. In this scenario, TNP delivers all frames within 500~µs with a mean latency of 368~µs, whereas LCDN achieves a mean of 395~µs but spreads its deliveries over a wider range up to approximately 620~µs. The steeper rise of the TNP curve confirms that its latency distribution is more tightly concentrated, which is a direct benefit of the collision-free release-time assignment: each TC frame traverses the network without encountering any contention. LCDN, by contrast, relies on priority-based queuing at switches where TC frames may experience variable queuing delays depending on the instantaneous traffic mix, resulting in a smoother and more spread-out CDF.

Overall, TNP emerges as the strongest method for TC traffic across all metrics: it admits 100\% of TC flows in every scenario and achieves the lowest mean end-to-end latency among the methods when taking into account admission. TP achieves the same perfect admission rate at a modest latency cost due to conservative worst-case BE blocking margins. LCDN's seemingly competitive latency is conditional on its lower admission rate under heavy load, which limits its applicability. All three methods operate on commodity switches without TSN hardware, in contrast to HERMES which requires full TAS and CBS support at every switch port.

\subsection{BE Throughput}
\label{sec:eval_be}

Figure~\ref{fig:be_throughput} shows the aggregate BE throughput achieved by each method across all 40 scenarios. HERMES achieves the highest BE throughput with a median of approximately 48~Mbps and a mean of approximately 50~Mbps, with individual scenarios reaching beyond 100~Mbps. This is consistent with its design philosophy: HERMES explicitly optimizes TC frame placement to maximize the time available for BE traffic on each switch port, and its per-switch TAS hardware ensures that BE frames can fill every gap in the GCL cycle without any source-side rate limiting.

\begin{figure}[t]
\centering
\includegraphics[width=0.9\columnwidth]{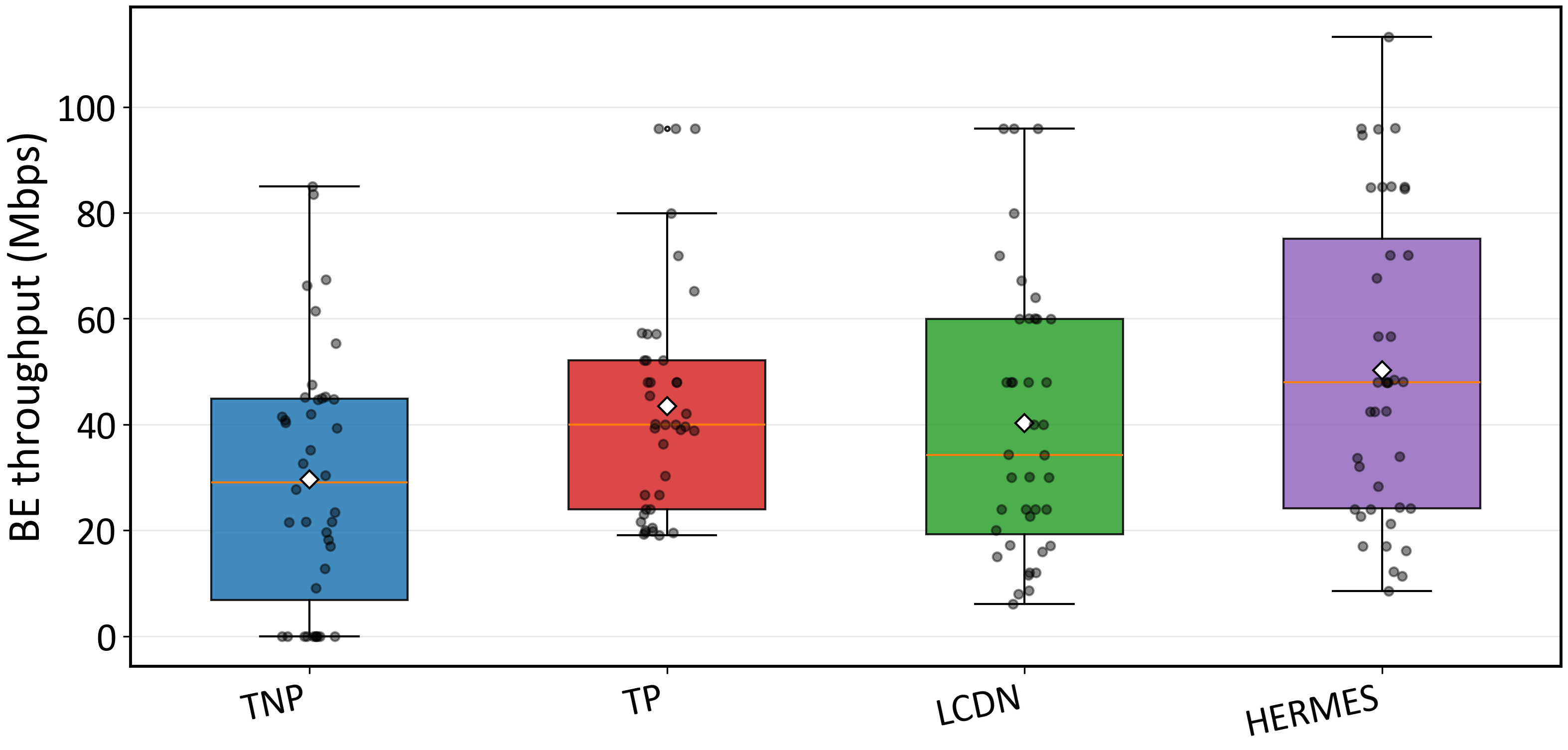}
\caption{Aggregate BE throughput across all 40 configurations.}
\label{fig:be_throughput}
\end{figure}

Among the commodity-switch methods, TP outperforms TNP in BE throughput, with a median of approximately 40~Mbps compared to approximately 29~Mbps for TNP. This difference is a direct consequence of their architectural designs. TNP is non-work-conserving: BE traffic is confined to designated BE phases within the GCL cycle and cannot transmit during TC phases or guard bands, even if the medium is idle. TP, by contrast, is work-conserving: BE endpoints may transmit whenever their CBS credit is non-negative and the egress link is not occupied by a TC frame, without waiting for a scheduled phase. This continuous access to the medium gives TP a structural advantage in BE utilization.

LCDN falls between TNP and TP, with a median of approximately 34~Mbps. Although LCDN does not impose guard bands or temporal partitioning, its token-bucket-based rate limiting constrains BE transmission more conservatively than the CBS shaping used in the TP method. As a result, LCDN cannot fully exploit the work-conserving advantage that TP achieves through its combination of strict-priority queuing and credit-based shaping.

The BE throughput results highlight a fundamental trade-off in the SbDN design. TNP provides the strongest TC latency guarantees and the tightest latency distribution, but its temporal partitioning reduces the time available for BE traffic. TP relaxes this constraint by making BE work-conserving, improving BE throughput at a small cost in TC latency. The choice between the two methods therefore depends on whether the targeted system requires to prioritize minimal TC latency or higher BE utilization.

\subsection{Overall Comparison and Trade-offs}
\label{sec:eval_tradeoff}

Figure~\ref{fig:tradeoff} summarizes the trade-offs across all four methods by plotting the mean TC admission rate against the mean BE throughput, with the circle size representing the estimated relative deployment cost of the switching infrastructure. We assign a baseline cost of $1\times$ to pure FIFO switches (used by TNP), $2\times$ to commodity switches with additional features such as strict-priority queuing (used by TP and LCDN), and $7\times$ to full TSN-capable switches with TAS and CBS hardware at every port (used by HERMES). These multipliers are approximate, order-of-magnitude estimates obtained from a survey of current commercial switch prices for representative devices in each category, and are consistent with the cost gap between TSN-capable and commodity switches reported in~\cite{Cost}. The ideal operating point lies in the top-right corner of the chart, combining the highest admission rate with the highest BE throughput, while having the smallest possible circle (lowest cost).

\begin{figure}[t]
\centering
\includegraphics[width=0.8\columnwidth]{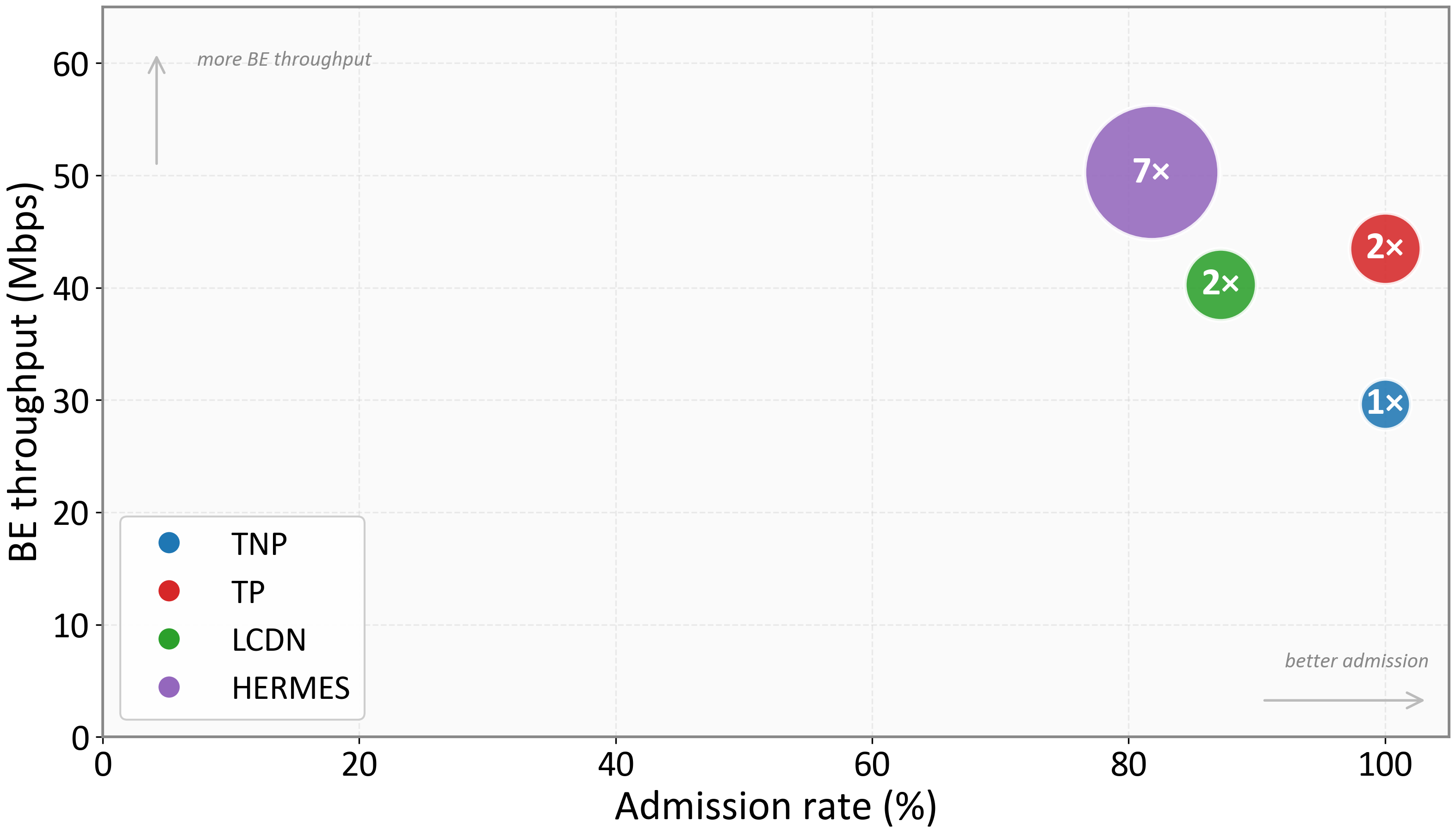}
\caption{Overall trade-off: mean TC admission rate vs.\ mean BE
throughput. Circle size indicates estimated relative deployment cost.}
\label{fig:tradeoff}
\end{figure}

TNP achieves 100\% TC admission and the lowest deployment cost, but its non-work-conserving design yields the lowest BE throughput among the four methods. TP also achieves 100\% admission and substantially improves BE throughput by making BE traffic work-conserving, at the cost of requiring strict-priority switches ($2\times$). Although TNP and TP share the same perfect admission rate in this summary view, the detailed analysis in Figure~\ref{fig:tc_e2e_zoom} showed that TNP achieves lower mean end-to-end TC latency because it does not inflate per-hop offsets with worst-case BE blocking. TNP is therefore the overall winner for TC-dominated workloads: it combines the lowest deployment cost, the highest admission rate, and the tightest latency guarantees. LCDN sits between TNP and TP on both axes, achieving slightly lower admission and moderate BE throughput at the same switch cost as TP ($2\times$). HERMES delivers the highest BE throughput, benefiting from its algorithm that explicitly optimizes for BE utilization, but at the expense of lower TC admission, significantly higher TC latency, and a deployment cost that is roughly $7\times$ that of a FIFO-based network.

The results demonstrate that source-based scheduling on commodity switches can match or exceed the TC guarantees of switch-based TSN solutions at a fraction of the deployment cost, while offering competitive BE throughput through the choice between the TNP and TP methods.

\section{Conclusion and Future Work}\label{sec:conclusions}

This paper presented SbDN, a multi-agent source-based architecture for deterministic networking that achieves TSN-grade latency guarantees on commodity Ethernet switches. By moving all scheduling intelligence to a centralized software controller and enforcing configurations exclusively at the source endpoints, SbDN eliminates the need for expensive TSN-capable switches and simplifies network reconfiguration: a new schedule can be computed and pushed to the endpoints in tens of milliseconds without disrupting ongoing traffic at the switches, since the switches are unaware of the scheduling and require no reconfiguration.

We proposed two methods that offer distinct trade-offs. The Temporal Network Partitioning (TNP) method operates on pure FIFO commodity switches and provides the strongest TC guarantees through strict temporal separation between traffic classes, achieving 100\% TC admission and the lowest end-to-end latency across all evaluated scenarios. The Traffic Prioritization (TP) method operates on commodity switches with strict-priority queuing and improves BE throughput by making BE traffic work-conserving, while maintaining the same perfect TC admission rate at a modest increase in TC latency. Both methods guarantee TC deadline satisfaction and were evaluated against HERMES and LCDN across 40 workload scenarios on two topologies (mesh, tree). The results demonstrate that source-based scheduling on commodity switches can match or exceed the TC guarantees of switch-based TSN solutions at a fraction of the deployment cost.

A limitation of our approach is that enforcement at the source endpoint requires software support at every sending device. Legacy endpoints that cannot be modified to run the source-side TAS, CBS, or release-time enforcement cannot participate in the SbDN architecture without an intermediate proxy or gateway. Addressing the integration of legacy devices is an important point of attention for practical deployment.

In future work, we plan to investigate machine learning approaches to further optimize the scheduling decisions made by the TC Scheduling Agent and the BE Shaping Agent. In particular, learning-based methods could adapt release-time assignments and idle slope configurations to observed traffic patterns, improving BE throughput while preserving TC deadline guarantees. We also plan to extend the architecture with online reconfiguration protocols that leverage the fast scheduling times demonstrated in this work to support dynamic flow admission and removal at runtime.

\end{document}